\newtheorem{theorem}{Theorem} 
\newtheorem{remark}{Remark}
\newtheorem{fact}{Fact}
\newtheorem{claim}{Claim}
\renewcommand{\vec}[1]{\boldsymbol{#1}}
\newcommand{\warn}[1]{#1}
\newcommand{\V}{\mathcal{V}}
\newcommand{\R}{\mathbb{R}}
\newcommand{\N}{\mathbb{N}}
\newcommand{\ReLU}{\mathrm{ReLU}}
\newcommand{\FF}{\mathsf{FF}}
\newcommand{\tf}{\mathsf{TF}}
\newcommand{\tm}{\mathsf{TM}}
\newcommand{\emb}{\mathsf{emb}}
\newcommand{\pos}{\mathsf{pos}}
\newcommand{\out}{\mathsf{out}}
\newcommand{\dec}{\mathsf{dec}}
\newcommand{\hardmax}{\mathsf{hardmax}}
\newcommand{\softmax}{\mathsf{softmax}}
\newcommand{\PD}{\mathsf{PD}}
\title{Efficient Turing Machine Simulation with Transformers}
\author{
Qian Li\thanks{Shenzhen International  Center For Industrial  And  Applied  Mathematics, Shenzhen Research Institute of Big Data. Email: \texttt{liqian.ict@gmail.com}}
\quad
Yuyi Wang\thanks{CRRC Zhuzhou Institute \& Tengen Intelligence Institute. Email: \texttt{yuyiwang920@gmail.com}}
}
\date{}
\begin{document}
\maketitle

\begin{abstract}
Constant bit-size Transformers are known to be Turing complete, but existing constructions require $\Omega(s(n))$ chain-of-thought (CoT) steps per simulated Turing machine (TM) step, leading to impractical reasoning lengths. In this paper, we significantly reduce this efficiency gap by proving that any $(t(n),s(n))$-bounded multi-tape TM can be simulated by a constant bit-size Transformer with an optimal $O(s(n))$-long context window and only $O(s(n)^c)$ CoT steps per TM step, where $c>0$ can be made arbitrarily small by letting the Transformers' head-layer product sufficiently large. In addition, our construction shows that sparse attention with fixed geometric offsets suffices for efficient universal computation. Our proof leverages multi-queue TMs as a bridge. The main technical novelty is a more efficient simulation of multi-tape TMs by synchronous multi-queue TMs, improving both time and space complexity under stricter model assumptions.  
\end{abstract}

\section{Introduction}\label{sec:intro}
Transformer-based large language models (LLMs) equipped with chain-of-thought (CoT) reasoning \citep{team2023gemini,anthropic2024claude3,deepseek,gpt5} have shown remarkable performance in various challenging reasoning tasks, including solving mathematical problems \citep{imo} and generating computer programs \citep{rein2023gpqa,gpt5-benchmark}. Beyond the empirical successes, there has been growing interest in understanding the mechanisms underlying the reasoning abilities of Transformers. A recent line of work \citep{perez,bhat,ashish,tengyu,prompt,pencil,bavandpourlower,constant} has established the Turing completeness of Transformers. In particular, it turns out \citep{constant} that even constant bit-size Transformers, where the number of parameters and precision are both constant independent of the input length, are Turing complete, provided sufficiently long context windows and chain-of-thought (CoT) reasoning steps are allowed. These results demonstrate that Transformers' reasoning ability is universal in principle.

However, critical gaps remain in our theoretical understanding: the efficiency of such simulations is left open. Table \ref{tab:turing-completeness} summarizes the efficiency of existing Turing completeness constructions. In particular, the constant bit-size construction of \citet{constant} achieves an optimal context window of length $O(s(n))$, which reflects the amount of memory required for reasoning, but incurs a cost of $\Omega(s(n))$ CoT steps to simulate one Turing machine (TM) step. Here $s(n)$ denotes the space bound of the simulated computation. For many problems, this $s(n)$-factor slowdown is prohibitively large, causing the total CoT length to far exceed the number of reasoning steps observed in practice.  Therefore, as pointed out by \citet{constant}, it is not only of theoretical interest but also of practical significance to investigate whether the slowdown can be avoided without compromising optimality in other aspects.

\begin{table}[t]
\centering
\caption{Comparing Transformers' complexity to simulate $(t(n),s(n))$-bounded Turing machines. `Dim.' denotes embedding dimension, `Window' means effective window size, and \warn{`Total CoT' denotes the total CoT length}. This paper focuses on the nontrivial case where $t(n),s(n)\geq n$. }
\label{tab:turing-completeness}
\resizebox{\linewidth}{!}{
\begin{tabular}{lccccc}
\toprule
\textbf{Source} & \textbf{Precision} & \textbf{Dim.} & \textbf{Window} & \warn{\textbf{Total CoT}}\\
\midrule
\cite{perez} & $O(\log t(n))$ & $O(1)$ & $n+t(n)$ & $O(t(n))$ \\
\cite{bhat} & unbounded & $O(1)$ & $n+t(n)$ & $O(t(n))$ \\
\cite{ashish} & $O(\log t(n))$ & $O(1)$ & $n+t(n)$ & $O(t(n))$\\
\cite{tengyu}$^\ast$ & $O(1)$ & $O(\log t(n))$ & $O(t(n)\log t(n))$ & $O(t(n)\log t(n))$  \\
\cite{prompt} & $O(\log t(n))$ & $O(1)$ & $O(t(n)\log t(n))$ & $O(t(n)\log t(n))$\\
\cite{pencil}$^\dagger$ & $O(\log t(n))$ & $O(1)$ & $s(n)$ & $O(t(n))$\\
\warn{\cite{bavandpourlower}} & $O(\log t(n))$ & $O(1)$ & $n+t(n)$ & $O(t(n))$\\
\cite{constant} & $O(1)$ & $O(1)$ & $s(n)$ & $O(t(n)\cdot s(n))^\P$\\
\textbf{This work} & $O(1)$ & $O(1)$ & $s(n)$ & $O(t(n)\cdot s^c(n))^{\ddagger}$\\
\bottomrule
\end{tabular}
}
\vspace{2pt}
\raggedright \footnotesize $^{\ast}$ Any multi-tape TM running in time $t(n)$ can be simulated by a Boolean circuit of size $O(t(n)\log t(n))$ \citep{arora2009computational}.

\raggedright \footnotesize $^{\dagger}$ The construction incorporates a reduction mechanism to recursively erase intermediate CoT steps.

\raggedright \footnotesize $^{\P}$ By a standard technique, a $t$-time $s$-space multi-tape TM can be converted into a Post machine running in time $O(t\cdot s)$ and space $O(s)$.

\raggedright \footnotesize $^{\ddagger}$ The exponent $c>0$ can be made arbitrarily small by letting the precision and embedding size sufficiently large constants.
\end{table}

\subsection{Our contribution} 
We make significant progress on the open efficiency problem by reducing the per-step slowdown from $s(n)$ to $s(n)^c$, where the exponent $c>0$ can be made arbitrarily small. 
Formally, letting the \emph{head-layer product} denote the product of the number of attention heads and the number of layers, our main theorem is stated as follows:

\begin{theorem}\label{thm:main0}
Any $(t(n),s(n))$ time-space bounded $k$-tape Turing Machine can be simulated by a constant bit-size Transformer with head-layer product $K$ and context window length \warn{$s(n)+1$}, that takes $O(s(n)^{6k/K})$ CoT steps for each simulated TM step, leading to a total CoT length of $O(t(n)\cdot s(n)^{6k/K})$.
\end{theorem}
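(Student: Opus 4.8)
My plan is to route the simulation through \emph{synchronous multi-queue Turing machines} in three stages: (a) replace the $k$-tape TM by a multi-queue TM with a modest number $m$ of queues, per-step overhead $s(n)^{O(k/m)}$, and space still $s(n)+O(1)$; (b) simulate a synchronous $m$-queue TM by a constant bit-size Transformer that emits one CoT token per queue-machine step, with head-layer product $\Theta(m)$ and window equal to the queue machine's space; (c) balance $m$ against the budget $K$ and tighten the space accounting to exactly $s(n)+1$.

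\textbf{Stage (a): the hierarchical block simulation (the main step).} I would store each of the $k$ work tapes as a nested hierarchy of $r\approx K/(6k)$ queues of geometrically growing capacity, the $i$-th queue holding a level-$i$ block of $b$ level-$(i{-}1)$ blocks, with base $b\approx s(n)^{6k/K}$ chosen so that the $r$ capacities multiply to $\Theta(s(n))$. The scanned cell always sits at the front of the innermost queue, so a tape move is one rotation there (cost $O(b)$ queue operations); the head descends to queue $i{+}1$ only when it crosses a level-$i$ boundary, which happens once per $\Theta(b^{\,i-1})$ TM steps and costs $O(b^{\,i})$ operations to re-stage. Since a Transformer emits only one symbol per CoT step and cannot move an entire block atomically, I cannot leave this cost amortized: I would \emph{pre-stage} it, spreading the $O(b^{\,i})$ operations for the next level-$i$ crossing uniformly over the $\Theta(b^{\,i-1})$ steps that precede it, i.e.\ $O(b)$ operations per level per TM step. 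Running all $kr=m$ queues in lockstep (dequeue-front / enqueue-back on every queue every step) keeps the combined layout perfectly periodic, which is exactly what stage (b) exploits. The per-TM-step cost is then $O(rb)=O(s(n)^{6k/K})$ queue-machine steps, and a careful re-encoding of the hierarchy keeps the total space at $s(n)+O(1)$.

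\textbf{Stages (b) and (c): the Transformer and the assembly.} I would interleave the $m$ queue contents along the CoT tape so that, by periodicity, the current front symbol of queue $j$ lies at a \emph{fixed} offset behind the position being generated; because all capacities are powers of $b$, these offsets form a fixed geometric progression. One sparse-attention head per queue (with that offset) fetches the needed symbols, the feed-forward layers implement the finite transition table and compute the next state together with each queue's new front/back symbols, and the freshly emitted CoT token carries the $m$ new back-of-queue symbols — one CoT step per queue-machine step, with head-layer product $\Theta(m)$; this is essentially a multi-queue upgrade of the Post-machine simulation of \citet{constant}, the one new ingredient being that constant-count, fixed-geometric-offset attention is enough. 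Composing (a) and (b), one TM step costs $O(s(n)^{6k/K})$ CoT tokens, so $t(n)$ TM steps cost $O(t(n)\,s(n)^{6k/K})$ in total; choosing $m=\Theta(K)$ makes the head-layer product exactly $K$; and since the queue machine runs in space $s(n)+O(1)$, the effective window is $s(n)+O(1)$, which a final re-encoding (folding the $O(1)$ slack into the input block) trims to $s(n)+1$.

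\textbf{Main obstacle.} The crux is stage (a): engineering the block hierarchy so that it is at once (i) \emph{synchronous} — every queue advancing one cell per step — so that stage (b) can read each queue with a single fixed-offset head and never has to search for a queue's front; (ii) \emph{worst-case} $s(n)^{6k/K}$ per step rather than merely amortized, which forces the level-by-level pre-staging above plus a schedule that interleaves the pre-staging of all $r$ levels without collisions; and (iii) space-tight to an additive constant, since even a constant-factor space blow-up would destroy the optimal window $s(n)+1$. Stage (b) is comparatively routine given \citet{constant}.
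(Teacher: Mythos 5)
Your three-stage skeleton (tape machine $\to$ multi-queue machine $\to$ Transformer, using geometrically-sized levels) is the same as the paper's, and your stage~(b) is essentially the multi-queue generalization of \citet{constant}. The problems are in stage~(a), which is exactly where you acknowledge the crux lies, and there you have left out the two ideas that make the paper's construction go through.

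First, the paper does \emph{not} simulate a tape directly by a hierarchy of queues. It first replaces each tape by \emph{two stacks} (the cells to the left of the head and the cells to the right), and then simulates a single \emph{stack} by $k'$ levels of queues. Your sketch puts the scanned cell ``at the front of the innermost queue'' and talks about crossing a level boundary when the head moves, but a tape head moves in both directions and a queue hierarchy of the kind you describe supports only a push/pop access pattern, not a bidirectional one. Without the two-stack decomposition it is not clear the nested-queue layout you describe is even well-defined, and your budget of $r\approx K/(6k)$ queues per tape---giving $K/6$ queues in total rather than $K$---suggests the missing factor of~$6$ is precisely the $2$ (two stacks per tape) times $3$ (three queues per level) that you have dropped.

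Second, the \emph{synchronous} regime (every queue dequeues and enqueues exactly one symbol on every step) is not something you can simply declare and then perform the level transfers as if queues could idle. When level $i$ spills into level $i{+}1$, the heads of both queues must land on the right cells on the right steps of the rotation, and when level~$i$ is refilled from level~$i{+}1$ you do not even know in advance where the block of real symbols ends. The paper resolves this by (i) splitting each content queue into two half-queues of equal size so that capacities of adjacent levels divide one another and heads stay phase-aligned, and (ii) attaching an auxiliary \emph{buffer queue} per level that temporarily absorbs overwritten cells during a POP transfer. Your proposal contains no analogue of either device, and without one the ``lockstep'' assertion is just a wish; this is the actual technical novelty of the paper's Step~1, and it is the piece your proof is missing.

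Finally, your pre-staging idea is misdirected. You read ``$O(s(n)^{6k/K})$ CoT steps for each simulated TM step'' as a worst-case per-step guarantee, but the paper only establishes (and only needs) an \emph{amortized} bound: a deep transfer can cost $\Theta(s(n))$ CoT steps in isolation, and the $O(s(n)^{6k/K})$ factor comes from dividing the total queue-machine time $O(t(n)\,s(n)^{1/k'})$ by $t(n)$. Pre-staging to evenly smear the cost would be a genuinely new and stronger claim, and your description of it (``spread the $O(b^i)$ operations over the $\Theta(b^{i-1})$ preceding steps'') does not explain how you could know, before the simulated head's future moves are determined, whether, when, or in which direction a level-$i$ crossing will occur. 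You should drop this and just state the bound as amortized, as the paper's proof does.
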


Moreover, our construction exhibits a sparse \emph{geometric-offset attention} property: every query attends to only a few tokens located at fixed relative positions, with offsets chosen in a geometric progression\footnote{Specifically, in Theorem \ref{thm:main0}, the offsets are $\lceil s(n)^{1/k'}\rceil,\lceil s(n)^{1/k'}\rceil^2,\dots,,\lceil s(n)^{1/k'}\rceil^{k'}$ with $k' = K/(6k)$.}. 
\warn{Interestingly, the geometric-offset attention pattern is in the same spirit as attention behavior observed: empirical analyses report that attention is typically sparse and spatially clustered, with many heads focusing on local neighborhoods and a minority capturing structured long-range connections, e.g. \citep{xiaoefficient,ribarsparq,jiang2024minference,gpt-oss}. }
 As a consequence, each token can be produced in $O(1)$ time, and simulating Turing machines incurs only a per-step slowdown of $O(s(n))^c$, in contrast to the $t(n)$ overhead in full attention. \warn{This suggests that the wide concern that the quadratic time complexity of full attention constitutes a fundamental throughput bottleneck for long-context Transformers (e.g., \citep{khan2022transformers,sarikaya2025path})) is not necessarily a principled expressiveness limitation of Transformer-based architectures.}

Our results highlight geometric-offset attention as a promising architectural direction and principled design choice for efficient reasoning: a query attends to only the tokens $1,2,4,8,\ldots,2^i,\ldots$ steps earlier \footnote{\warn{In the regime of practical interest, the head-layer product $K$ is typically large (e.g. $K$ on the order of $10^3$-$10^4$ in modern LLMs). Moreover, it is natural to focus on $k=2$, since $2$-tape TM can efficiently simulate multi-tape TMs with only a logarithmic slowdown. So, if we take $K=6\times 10^3$ and $k=2$, then the common ratio would be $\lceil s(n)^{6k/K}\rceil=2$ for any $s(n)\leq 2^{500}$.}}. Such exponentially spaced connections avoids the quadratic overhead of full attention, while still being sufficient for efficient universal computation as shown by our theory. Notably, similar exponentially spaced or logarithmically sparse patterns have already been explored in practice, such as in LogSparse Transformers \citep{li2019enhancing} and PowerAttention \citep{chen2025powerattention}. Our results thus provide a theoretical evidence that these sparse designs can retain the computational universality while improving efficiency.

Our proof strategy is inspired by the approach of \citet{constant}, which established a bridge between TMs and Transformers via \emph{one-queue} Turing machines (a.k.a. Post machines). We generalize this idea by introducing \emph{multi-queue} Turing machines as the bridge. Concretely, Step 2 of our proof can be seen as a generalization of that of \citet{constant} to multi-queue machines; the main technical novelty of this work lies in Step 1.

\paragraph{Step 1: From multi-tape TMs to multi-queue TMs.} Previous works have studied the relationship between multi-tape and multi-queue machines; see \citep{queue-survey} for an overview of existing results. In particular, \cite{queue1} showed that any $(t(n),s(n))$-bounded $k$-tape TM can be simulated by a $k'$-queue TM that runs in $\left(O(t(n)^{1+1/k'}),O(t(n)^{1+1/k'})\right)$ time and space. However, their multi-queue model is more permissive: queues are allowed to remain idle in a step (i.e., neither pop nor push symbols). 
By contrast, we consider the more restrictive \emph{synchronous} model, where every queue pops exactly one symbol and also appends exactly one symbol at each step. Despite this restriction, we show that any $(t(n),s(n))$-bounded $k$-tape TM can be simulated by a synchronous $(6kk')$-queue TM that runs in $\left(O(t(n)\cdot s(n)^{1/k'}),O(s(n))\right)$ time and space (Theorem~\ref{thm:tmqm}). Our improvement over \citep{queue1} is threefold: (i) extending the result to the more restrictive synchronous model, (ii) reducing the space complexity from $t(n)^{1+1/k'}$ to $s(n)$, and (iii) reducing the time slowdown from $t(n)^{1/k'}$ to $s(n)^{1/k'}$.

\paragraph{Step 2: From multi-queue TMs to Transformers.} Once Step 1 is established, we adapt the simulation technique of \citet{constant} to synchronous multi-queue TMs. Specifically, we prove that any synchronous $(t(n),s(n))$-bounded $K$-queue TM can be simulated by a constant bit-size Transformer with head-layer product $K$, context window $O(s(n))$, and CoT length $O(t(n))$  (Theorem~\ref{thm:qmtr}). 
Now, our main theorem, namely Theorem \ref{thm:main0}, follows immediately from these two steps.

\subsection{Other related works}
\paragraph{Formal-language expressivity.} A substantial body of work \citep{Hahn20,yao2021self,hao2022formal,feng,chiang2023tighter,merrill2023logic,yang2024masked,barcelological,depth,lijie} studies Transformers through the lens of formal-language recognition, asking which language classes are captured under different architectural and positional-encoding assumptions, e.g., absolute vs. relative PE, with vs. without CoT, restricted attention, precision assumptions, and depth. The survey by \cite{strobl2024formal} provides a comprehensive overview of upper and lower bounds across variants, clarifying how design choices affect expressivity.

\paragraph{Reasoning efficiency.}  Current reasoning models tends to overthink, generating excessively long CoTs that causes significant computational redundancy \citep{chen2024not,sui2025stop}. Moreover, it has been shown that longer CoTs do not necessarily improve accuracy and can even hurt it \citep{wu2025more,yang2025towards,jin2024impact}. To mitigate overthinking, proposed methods include RL with length penalties \citep{luo2025o1,aggarwal2025l1}, SFT on variable-length traces \citep{ma2025cot,xia2025tokenskip}, and latent reasoning that avoids generating explicit CoTs \citep{hao2024training,su2025token}. Beyond shortening CoTs, another line of work seeks to maintain strong reasoning with smaller models by applying compression techniques (e.g., distillation, quantization, and pruning) and by directly training smaller models with RL \citep{li2023mixed,li2025small,zhu2024improving}. For comprehensive surveys of efficient reasoning methods, see \citep{feng2025efficient} and \citep{sui2025stop}.

\section{Preliminaries}\label{sec:notation}

Throughout this paper, we adopt the following standard notation. The sets of real and natural numbers are denoted by $\R$ and $\N$ respectively, and $[n]:={1,2,\ldots,n}$ for $n\in\N$. Vectors and sequences are written in bold lowercase (e.g., $\vec{x}$), and matrices in bold uppercase (e.g., $\vec{A}$). For a vector or sequence $\vec{x}$, let $x_i$ denote its $i$-th element, and let $\vec{x}_{j:i}$ denote $(x_j, x_{j+1},\cdots,x_i)$ for $j\leq i$. For a matrix $\vec{A}$, let $A_{i,j}$ denote the $j$-th element in the $i$-th row. We write $\vec{0}_n$ for the $n$-dimensional zero vector.

\subsection{Multi-tape Turing machines}
A $k$-tape Turing machine (TM) has \warn{$k$ tapes, one of which initially contains the input and is called the input tape.} Each tape is infinite to the right and bounded on the left, and equipped with a tape head. It can be defined as a tuple $\langle\Sigma,Q,\delta\rangle$ where 
\begin{itemize}
\item $\Sigma$ is a finite tape alphabet, including $0$, $1$, and a blank symbol $\perp$.
\item $Q$ is a finite set of states, including a start state $q_{start}$ and a halting state $q_{halt}$.
\item $\delta: Q\times \Sigma^k\rightarrow Q\times \Sigma^{k-1}\times (\{\mbox{Left},\mbox{Stay},\mbox{Right}\})^k$ is a transition function.
\end{itemize}
Initially, the input tape contains a finite $\{0,1\}$-string $x$ as the input, and the other $k-1$ tapes are all empty. The computation repeats the following until the TM enters $q_{halt}$: if the machine is in state $q\in Q$ and reading symbols $(\sigma_0,\cdots,\sigma_{k-1})$ from the tapes, and if $\delta(q,\sigma_0,\cdots,\sigma_{k-1})=(q',z_0,\sigma_1',z_1,\cdots,\sigma_{k-1}',z_{k-1})$ where $z_i\in\{\mbox{Left},\mbox{Stay},\mbox{Right}\}$, then the TM changes its state to $q'$, writes $\sigma'$ to the working tapes, and moves the heads according $z$. 

\warn{Without loss of generality, and for technical convenience, we assume that at the start of the computation the Turing machine first performs a scan over the input tape.}

\warn{The \emph{space complexity} of a $k$-tape Turing machine is the function $s(n)$ that, for each input length $n$, gives the maximum number of distinct cells visited on the $k$ tapes during the computation on any input of length $n$.}

\subsection{Multi-queue Turing machines}
In contrast to multi-tape TMs that operate on tapes, multi-queue TMs operate on queues. Previous works on multi-queue TMs (e.g. \citep{queue1}) studied the model in which each queue can remain idle in a step (i.e., neither pop nor push symbols), 
or equivalently (up to a constant factor slowdown), only one queue can pop or append an element at each step. In this paper, for technical reasons, we consider a more restrictive variant, called the \emph{synchronous multi-queue TM}, where each queue pops exactly one element and also append exactly one element at each step. Note that a synchronous multi-queue TM can be simulated by the traditional model with only a constant factor slowdown, whereas the reverse direction is unlikely to hold.

Formally, a $k$-queue synchronous TM has \warn{$k$ queues, one of which initially contains the input and is called the input queue. It} 
 can be defined as a tuple $\langle\Sigma,Q,\delta\rangle$ where 
\begin{itemize}
\item $\Sigma$ is a finite tape alphabet, including $0$, $1$, and a blank symbol $\perp$.
\item $Q$ is a finite set of states, including a start state $q_{start}$ and a halting state $q_{halt}$.
\item $\delta: Q\times \Sigma^k\rightarrow Q\times \Sigma^{k}$ is a transition function.
\end{itemize}
Initially, the input queue contains the input string, and the other $k-1$ queues contains only blank symbols. The computation repeats the following until the TM enters $q_{halt}$: if the machine is in some state $q\in Q$ and reading symbols $(\sigma_0,\cdots,\sigma_{k-1})$ popped from the queues, and if $\delta(q,\sigma_0,\cdots,\sigma_{k-1})=(q',\sigma_0',\cdots,\sigma_{k-1}')$, then the TM changes its state to $q'$ and appends $\sigma'$ to the queues.

\warn{The \emph{space complexity} of a $k$-queue (synchronous) TM is the function $s(n)$ that, for each input length $n$, gives the maximum total length of the $k$ queues during the computation on any input of length $n$.}

\subsection{Transformer Architecture}
Let $\V$ be the vocabulary. A decoder-only Transformer $\tf_{\theta}$ is a parameterized function $\bigcup_{i\ge 1}\V^i\rightarrow \V$ composed of an embedding layer,  
multiple decoder layers, and an output layer. 



\paragraph{Token embedding layer}
\warn{For each previous token $v_j\in\V$ (with $j\leq i$), we map it to a $d$-dimensional embedding vector $\vec{h}^{0}_j:=\emb(v_j)$.
 We do \emph{not} add absolute positional embeddings to $\vec{h}^0_j$. Instead, we introduce
relative positional information only inside the self-attention scores, following the formulation of \cite{shaw2018self}. Here, we call $d$ the \emph{model dimension} or \emph{embedding dimension}}.




\paragraph{Decoder Layers} The $\vec{h}^{0}_j$ is then processed by $L$ decoder layers. Each decoder layer $\dec_{\ell}$ consists of a self-attention sub-layer followed a feed-forward network sub-layer, with residual connections and layer normalization applied around each sub-layer.
For simplicity and following \citep{tengyu}, we omit layer normalization; see \citep{tengyu} for how it can be incorporated without affecting our results. In addition, following common practice\footnote{As we will see, in our construction, the input to $\hardmax$ is always a one-hot vector, and so is the output. \warn{As a consequence, our Transformer construction in Theorem \ref{thm:qmtr} works with all variants of hardmax}}, we use $\hardmax$ as a proxy for $\softmax$. 
\begin{itemize}[leftmargin=16pt]
\item \warn{Self-attention sublayer: For each head $k=0,1,\cdots,H-1$, compute the attention score as\footnote{
Our conclusions (specifically Theorem \ref{thm:qmtr} and Theorem~\ref{thm:main0}) continue to hold under other relative positional encoding formations, e.g. using $\langle \vec{h}^\ell_j\cdot \vec{K}_{k}^{\ell},\vec{h}^\ell_i\cdot \vec{Q}_k^{\ell} \rangle+\pos(i-j)$ instead of $\langle \vec{h}^\ell_j\cdot \vec{K}_{k}^{\ell}+\pos(i-j),\vec{h}^\ell_i\cdot \vec{Q}_k^{\ell} \rangle$. In fact, our construction in Theorem~\ref{thm:qmtr} uses relative positional encoding to realize attention to tokens located at fixed relative positions.}
\[
\vec{s}_{k,i}^\ell=\hardmax\left(\langle \vec{h}^\ell_1\cdot \vec{K}_{k}^{\ell}+\pos(i-1),\vec{h}^\ell_i\cdot \vec{Q}_k^{\ell} \rangle,\cdots,\langle \vec{h}^\ell_i\cdot \vec{K}_{k}^{\ell}+\pos(i-i),\vec{h}_i^\ell\cdot \vec{Q}_k^{\ell} \rangle\right),
\]
where $\pos(i-j)\in \R^{d/H}$}. The head output is then
\[
\vec{a}^{\ell}_{i,k}=\sum_{j=1}^{i} s_{k,i,j}^\ell\cdot v^{\ell}_k(\vec{h}^{\ell}_i), \mbox{ where } v^{\ell}_k (h):=\vec{h}\cdot \vec{V}^{\ell}_k
\]
Concatenating all heads yields $\vec{a}^{\ell}_i=\left((\vec{a}^\ell_{i,1})^T,\cdots,(\vec{a}^\ell_{i,H})^T\right)^T\in \R^d$.
The residual update is 
\[
\vec{h}_i^{\ell+0.5}:=\vec{W}^\ell\cdot \vec{a}^{\ell}_i + \vec{b}^\ell + \vec{h}_{i}^{\ell},
\]
Here $\vec{Q}^{\ell}_k,\vec{K}^{\ell}_k,\vec{V}^{\ell}_k\in\mathbb{R}^{d\times d/H},\vec{W}^\ell\in\R^{d\times d}$ and $\vec{b}^\ell\in \R^d$ are learnable parameters.

\item Feed-forward sub-layer: Apply a fully-connected $\ReLU$ neural network $\FF^\ell$:
\[
\vec{h}^{\ell+1}_i=\FF^\ell(\vec{h}_i^{\ell+0.5})+\vec{h}_i^{\ell+0.5}.
\]
\end{itemize}
\paragraph{Output layer} The final representations $\vec{h}^{L}_i$ is mapped to the vocabulary by:
\[
v_{i+1}:=\out(\vec{h}^L_i)=\arg\max(\vec{W}^{\out}\cdot \vec{h}^L_i + \vec{b}^{\out}), \mbox{ where } \vec{W}^{\out}\in \R^{|\V|\times d} \mbox{ and } \vec{b}^\out\in\R^{|\V|}.
\]

A Transformer is said to have a context window of length $s$ if the query can attend only to the most recent $s$ tokens. Its \emph{bit-size} is defined as $p \times |\vec{\theta}|$, where $p$ denotes the precision, \warn{i.e., all learnable parameters are stored with $p$ bits per scalar}, and $|\vec{\theta}|$ the number of learnable parameters. The \emph{head-layer product} is defined as the product of the number of heads $H$ and the number of layers $L$.

\section{Step I: Efficient simulation of multi-tape TMs by multi-queue TMs}\label{sec:step1}

This section proves the following theorem. 
\begin{theorem}\label{thm:tmqm}
Any $(t(n),s(n))$ time-space bounded $k$-tape Turing Machine $M$ can be simulated by a synchronous $6kk'$-queue Turing Machine $M'$ which is $(O(t(n)\cdot s(n)^{1/k'}),\warn{O(k\cdot s(n))})$ time-space bounded.
\end{theorem}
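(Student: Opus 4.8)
Since a step of $M$ reads and rewrites the $k$ scanned cells and moves the $k$ heads, and since the $6kk'$ queues of $M'$ will be partitioned into $k$ disjoint groups of $6k'$ queues run in lockstep (one group per tape of $M$), it suffices to explain how to simulate a \emph{single} tape using $6k'$ queues so that one $M$-step costs $O(s(n)^{1/k'})$ amortized queue-steps; the $k$ groups are then driven synchronously and padded to a common length. Following the classical two-stack view of a tape, I would represent a tape's contents by two stacks: the portion strictly left of the head, stored bottom-up, and the portion from the head rightward, stored top-down. With this encoding a step on that tape is realized by a bounded number of stack primitives: read the top symbol (the scanned cell), overwrite the top symbol, and, to move the head one cell, pop one stack and push the popped symbol onto the other (extending with a blank when a stack is empty). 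So the task reduces to a data structure for a stack of up to $s(n)$ symbols, over $3k'$ queues (the factor $6 = 2 \cdot 3$: two stacks per tape, three queues per stack at each of $k'$ levels --- one active queue plus two scratch queues), supporting these primitives in $O(s(n)^{1/k'})$ amortized time and $O(s(n))$ space, with the finite control of $M'$ carrying the simulated state of $M$ together with a bounded program counter tracking which sub-phase of the current $M$-step is being executed.

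The data structure I have in mind is a hierarchical, blocked queue-stack with radix $b := \lceil s(n)^{1/k'} \rceil$. Queue $Q_i$ ($1 \le i \le k'$) holds up to $O(b^i)$ symbols, logically grouped into blocks of $b^{\,i-1}$ consecutive tape symbols, and the levels are glued along their boundaries so that the true stack top sits at one distinguished end of $Q_1$. Reading or overwriting the top symbol is one full rotation of $Q_1$, costing $O(b)$ queue-steps, each a legal ``pop one, push one''. A push places the new symbol at the top end of $Q_1$; if $Q_1$ overflows its capacity, its bottom block of $b$ symbols is flushed onto $Q_2$; if $Q_2$ overflows, a super-block of $b^2$ symbols is flushed onto $Q_3$; and so on up the hierarchy, with pops handled symmetrically by pulling blocks back down. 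A level-$i$ flush moves $O(b^i)$ symbols but is triggered only once per $\Theta(b^i)$ stack primitives, so it contributes $O(1)$ amortized per primitive; summing the $k'$ levels gives $O(b) = O(s(n)^{1/k'})$ amortized per primitive. To keep the space at $O(s(n))$ rather than letting shuffled blocks accumulate as garbage, freed cells are reused (this is the improvement over \citep{queue1}, whose construction allows space up to $t(n)^{1+1/k'}$); and since $b$ is not a constant and cannot be held in finite control, block boundaries and completion of a rotation are detected via marker symbols carried inside the queues, with $b$ itself maintained dynamically so that it tracks $\Theta(s(n)^{1/k'})$ as the simulated computation uses more space.

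Assembling the pieces: each simulated $M$-step triggers $O(k) = O(1)$ stack primitives across the $k$ tape-groups, hence $O(s(n)^{1/k'})$ amortized queue-steps (the $O(k)$ factor absorbed into the constant), for a total running time of $O(t(n) \cdot s(n)^{1/k'})$; because the theorem asks for a bound on \emph{total} time, the amortized accounting suffices and no deamortization is needed. For space, each of the $k$ tapes of $M$ visits at most $s(n)$ cells, so its queues hold $O(s(n))$ content plus $O(k' \cdot s(n)^{1/k'})$ slack for markers and partially filled blocks, giving $O(k \cdot s(n))$ overall. The synchronous discipline --- every queue pops and pushes exactly one symbol per step --- is enforced by keeping every queue nonempty (a permanent sentinel) and rotating in place any queue not currently participating; the scratch queues at each level are precisely what let a block be transferred (and, where bookkeeping requires, reversed) one symbol at a time while all queues stay in lockstep. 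Finally, $M'$ first reads its input queue once to load $x$ into the structure, which is consistent with the assumed initial input scan of $M$ and costs only $O(n \cdot s(n)^{1/k'}) = O(t(n) \cdot s(n)^{1/k'})$ steps.

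The main obstacle is internal to Step 1: making the hierarchical-flush argument go through \emph{with only finite control and under the synchronous constraint}. One must detect cycle completion and block fullness without counters up to $b$ (hence the marker machinery), keep $b$ correctly calibrated to $\Theta(s(n)^{1/k'})$ as the used space grows, carry out every block transfer as a sequence of legal one-in/one-out queue moves without ever idling a queue, and recycle freed space so the total never exceeds $O(k \cdot s(n))$. Verifying that all of this fits within $6kk'$ queues while preserving the $O(t(n) \cdot s(n)^{1/k'})$ amortized time bound is the technical heart of the theorem.
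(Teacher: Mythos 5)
Your high-level architecture matches the paper's: per tape, two stacks; per stack, $k'$ levels of queues of geometrically growing size $\Theta(b^i)$ with $b=\lceil s(n)^{1/k'}\rceil$; overflow/underflow cascades between adjacent levels; amortized $O(b)$ per stack primitive. The marker idea (special accents on boundary cells) also mirrors the paper. However, there are two concrete gaps. First, your proposal to keep $b$ ``maintained dynamically so that it tracks $\Theta(s(n)^{1/k'})$ as the simulated computation uses more space'' is incompatible with the synchronous model: since every queue pops exactly one symbol and pushes exactly one symbol at every step, each queue's length is an \emph{invariant} of the computation and cannot grow. The paper sidesteps this entirely by pre-sizing all queues as a function of $n$ (padding with dummies), so $b$ is fixed from the outset; the markers delimit these pre-sized queues rather than tracking a moving frontier. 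Nothing in your sketch explains how $b$ could be recalibrated without breaking synchrony.

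Second, the central technical obstacle the paper identifies --- and that your sketch only gestures at --- is head alignment under synchrony: all cyclic heads advance one cell per step, so when level-$i$ content must be transferred to/from level-$(i-1)$, both heads must be at exactly the right cells at exactly the right time. The paper's solution is structural: split each level's content queue into two \emph{half} queues $T'_{i,L},T'_{i,R}$ of size exactly $b^i$ each (plus one buffer $T'_{i,B}$ of size $2b^i$), so that $|T'_{i-1,L}|$ divides $|T'_i|$ and an invariant holds that when $T'_i$'s head sits on its first dummy (or last non-dummy), $T'_{i-1,L}$'s head is at its leftmost (or rightmost) cell. That divisibility-driven invariant is what makes the transfer legal without idling a queue. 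Your ``one active queue plus two scratch queues'' framing is not the same decomposition, and your phrase ``rotating in place any queue not currently participating'' asserts alignment without an invariant that guarantees it. Without that invariant (or an equivalent), the claimed transfers between levels do not obviously execute as legal pop-one/push-one sequences, which is precisely the technical heart you flag but do not resolve.
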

The basic idea is to simulate each tape using two stacks, and then simulate a single stack with $k'$ levels of queues whose sizes grow geometrically. Concretely, level $i$ consists of a \emph{content queue} of size $2\lceil s^{1/k'}\rceil^{i}$ (realized as two \emph{half queues} of size $\lceil s^{1/k'}\rceil^{i}$ each), and an \emph{auxiliary buffer queue} of size $2\lceil s^{1/k'}\rceil^{i}$. The content queue maintains the invariant that: a block of real symbols followed by a block of dummy symbols. The top of the simulated stack always sits at the boundary of real and dummy cells in level 1, while deeper stack elements are stored in higher levels.
\begin{itemize}[leftmargin=16pt]
\item \textbf{Stack Push.} To push a new symbol onto the stack, we insert it into the first dummy cell of level 1. If level 1 becomes full, half of its content is moved to level 2; if level 2 is full, half of its content is moved to level 3, and so on.

\item \textbf{Stack Pop.} To pop from the stack, we remove the last real symbol from level 1. If level 1 becomes empty, refill the left half queue by pulling elements from level 2, again recursively if needed.
\end{itemize}

The crucial property is that higher-level queues are much larger but accessed far less frequently, so expensive transfers are rare. A basic push or pop costs $O(s^{1/k'})$. A transfer between levels $i-1$ and $i$ costs $O(s^{i/k'})$ but occurs only once every $O(s^{(i-1)/k'})$ simulated stack steps, giving an amortized overhead of $O(s^{1/k'})$ per stack step. Thus simulating $t(n)$ steps taks $O(t(n)\cdot s^{1/k'})$ time while using overall queue space $O(s)$. Since each tape equals two stacks and each stack requires $3k'$ queues, simulating $k$ tapes needs $6kk'$ queues in total.

\begin{figure}[t]
  \centering
  \includegraphics[width=15cm]{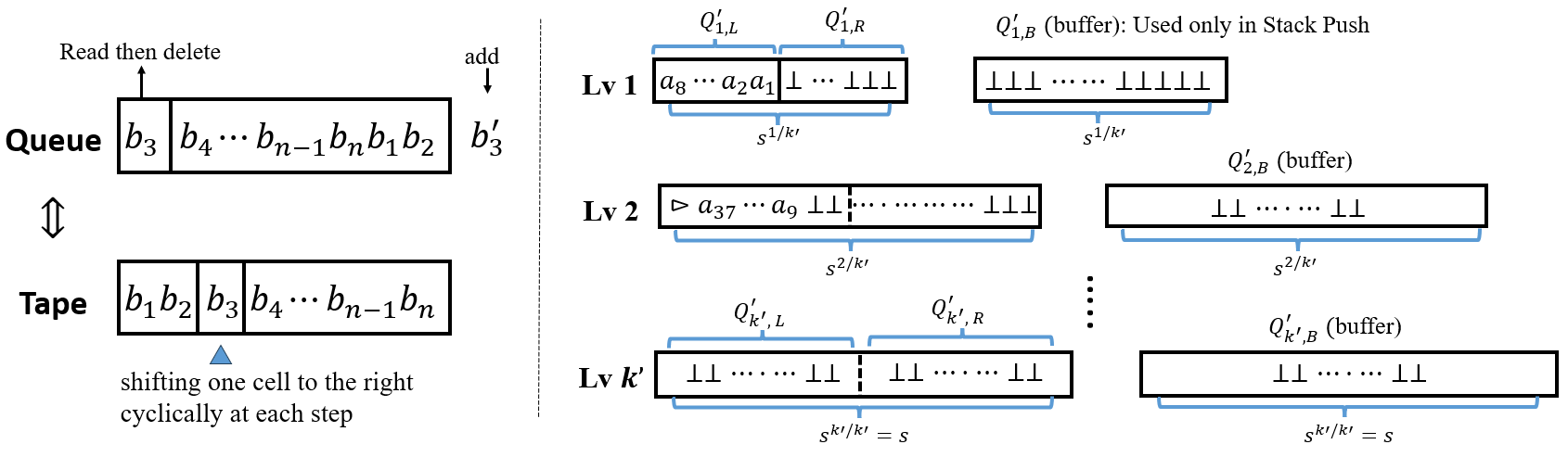}
  \caption{\warn{Left: A queue can be viewed as a tape with the head shifting one cell to the right cyclically at each step. Right: Illustration of the queues to simulate a stack.}}
  \label{fig:thm2}
\end{figure}

\begin{proof}
Since each tape can be simulated by two stacks, it suffices to describe how to simulate a single stack $\PD$ of the TM $M$ using $3k'$ queues of $M'$:
\[
(Q'_{1,L},Q'_{1,R},Q'_{1,B}),\cdots,(Q'_{i,L},Q'_{i,R},Q'_{i,B})\cdots,(Q'_{k',L},Q'_{k',R},Q'_{k',B}).
\]
Here, the subscripts ``L", ``R", and ``B" stand for ``Left Half", ``Right Half", and ``Buffer", whose precise roles will be clarified below. We assume that $\PD$ contains a distinguished bottom element that is never popped. During the simulation, the length of $Q'_{i,L}$ and $Q'_{i,R}$ is fixed at $(\lceil s^{1/k'}\rceil)^{i}$, and $|Q'_{i,B}|$ is fixed at $2(\lceil s^{1/k'}\rceil)^{i}$. So the total queue length is 
\[
\sum_{i=1}^{k'} 4\times (\lceil s^{1/k'}\rceil)^{i}=O(s).
\]

It would be helpful to imagine a queue $Q_{i,*}'$ as a tape $T_{i,*}'$ of the same length, with the tape head shifting exactly one cell to the right cyclically at each step of $M'$. 
\begin{itemize}
\item Let $\Sigma$ denote the alphabet of $\PD$. Then we define $\Sigma':=\{a,\hat{a},\tilde{a}\mid a\in \Sigma\}\cup\{\bot\}$ as the alphabet of each $T'_{i,*}$. Here, $\hat{\cdot}$ and $\tilde{\cdot}$ mark the leftmost and rightmost cells of $T'_{i,*}$ respectively, and $\bot$ denotes a dummy placeholder. A tape $T'_{i,*}$ is said to be \emph{full} if it contains only \emph{real symbols} (i.e., non-dummy symbols), and \emph{empty} if it contains only dummies.
\item Let $T_i':=T'_{i,L}\circ T'_{i,R}$ be the concatenation of $T'_{i,L}$ and $T'_{i,R}$. Intuitively, $T_i'$ plays the role of the \emph{content queue} of level $i$ in the proof intuition: it stores the actual stack symbols, and its content is arranged as a block of real symbols followed by a block of dummies (from left to right) during the simulation. 
We call $T_i'$ \emph{balanced} if it contains equal numbers of non-dummies and dummies. Equivalently, this means $T'_{i,L}$ is full and $T'_{i,R}$ is empty.
\item During the simulation, we will maintain the following property: The top stack elements of $\PD$ are stored in $T_1'$ and deeper elements are stored in $T'_{2},T'_{3},\cdots,T'_{k'}$, such that the concatenation $T_{k'}'\circ\cdots\circ T_{1}'=\PD$.
\item By the synchronous condition, at each step of $M'$, every queue $Q'_{i,*}$ pops exactly one symbol and appends exactly one symbol. Accordingly, the head of each tape $T_{i,*}'$ shifts exactly one cell to the right cyclically.
\end{itemize}

At the beginning of the simulation, $\PD$ contains only the distinguished bottom element; all $T'_{i,*}$ are empty except $T_{1,L}'$ stores the bottom element in its leftmost cell; all heads start at the leftmost tape cell. 
Now, we show how to simulate the stack push and pop operations. 

\begin{algorithm}[t]
\caption{Simulation of pushing $a$ into $\PD$}\label{alg:push}
\begin{algorithmic}[1]
  \State Rotate the tape heads until reach the first dummy element of $T'_1$;
  \State replace $\bot$ with $a$ in the current cell of $T'_1$;
  \If{the current cell of $T'_1$ has an accent $\tilde{\cdot}$}
    \State run \Call{PUSH}{2};
  \EndIf

  \Statex

  \Procedure{PUSH}{i}
    \State Rotate the tape heads until reach the first dummy element of $T'_i$;
    \While{the current cell of $T'_{i-1,R}$ is non-dummy}
      \State write this non-dummy into the current cell of $T'_i$;
      \State write $\bot$ into the current cell of $T'_{i-1,R}$;
      \State rotate the tape heads one cell to the right cyclically.
    \EndWhile
  \If{$T'_i$ is full}
       \State run \Call{PUSH}{i+1};
  \EndIf
  \EndProcedure
\end{algorithmic}
\end{algorithm}

\begin{algorithm}[t]
\caption{Simulation of poping a element from $\PD$}\label{alg:pop}
\begin{algorithmic}[1]
  \State  Rotate the tape heads until reach the last non-dummy element of $T'_1$, denoted as $a$;
  \State  replace $a$ with $\bot$ in the current cell of $T'_1$;
  \If{the current cell of $T'_1$ has an accent $\hat{\cdot}$}
    \State run procedure \Call{Pop}{2};
  \EndIf
  \State \Return{a};

  \Statex

  \Procedure{Pop}{i}
    \If{$T'_{i}$ is not empty}
       \State Rotate the tape heads until reach the leftmost element of $T'_{i}$;
       \While{the current cell of $T'_{i}$ is non-dummy}
            \If{the current cell of $T'_{i-1,L}$ is non-dummy}
                 \State write this non-dummy of $T'_{i-1,L}$ into the current cell of $T'_{i,B}$;
            \EndIf
            \State write this non-dummy of $T'_{i}$ into the current cell of $T'_{i-1,L}$;
            \State rotate the tape heads one cell to the right cyclically;
    \EndWhile
    \EndIf
  \If{$T'_{i}$ is empty}
       \State run \Call{POP}{i+1};
  \EndIf
  \EndProcedure
\end{algorithmic}
\end{algorithm}
\paragraph{Stack Push} To push a symbol $a$ on stack $\PD$, $M'$ rotates the tape heads until the first dummy cell of $T_i'$ is reached and replaces it with $a$. 

If $T'_1$ becomes full, then the first $|T'_1|/2$ non-dummies of $T'_{1}$ are pushed into the first $|T'_1|/2$ dummy cells of $T'_{2}$ (i.e., PUSH(2) in Algorithm \ref{alg:push}). Specifically, in PUSH(2), the entire content of $T'_{1,L}$ is copied into these cells of $T'_2$, $T'_{1,L}$ is cleaned, and then the remaining contents of $T'_1$ is shifted left by $|T'_1|/2$ cells (or equivalently, just swap the roles of $T_{1,L}'$ and $T_{1,R}'$). This preserves the arrangement that dummies to the left and dummies to the right. If $T'_2$ is full, then execute PUSH(3), and so on. 

\begin{fact}\label{fact1}
Right after PUSH$(i)$, all $T_1',\cdots,T_{i-1}'$ are balanced.
\end{fact}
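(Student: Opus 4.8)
}
The plan is to prove Fact~\ref{fact1} by induction on the level $i$, after first strengthening the statement: on its own, ``right after PUSH$(i)$ the tapes $T'_1,\dots,T'_{i-1}$ are balanced'' does not close under induction, because to know what PUSH$(i)$ leaves behind on the lower levels one must know their state on entry. I would therefore prove, simultaneously by induction on $i = 2,3,\dots,k'$, the pair of statements
\begin{itemize}[leftmargin=18pt]
\item[(E$_i$)] whenever PUSH$(i)$ is invoked, the tapes $T'_1,\dots,T'_{i-2}$ are balanced and $T'_{i-1}$ is full;
\item[(X$_i$)] whenever PUSH$(i)$ returns, the tapes $T'_1,\dots,T'_{i-1}$ are balanced.
\end{itemize}
Here ``balanced'' and ``full'' refer to the sequence of symbols stored on a tape and are independent of the current head position (between operations the heads are returned to their canonical cell). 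Statement (X$_i$), ranging over all $i$, is exactly Fact~\ref{fact1}.

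The ingredient that makes the induction routine is a \emph{locality} observation, which I would isolate first: by inspection of Algorithm~\ref{alg:push}, the body of PUSH$(j)$ writes only into cells of $T'_{j-1,R}$ and of $T'_j$, besides whatever its tail call PUSH$(j+1)$ writes; unfolding the recursion, a single invocation of PUSH$(i)$ — together with all nested calls it spawns — never alters the contents of any tape $T'_m$ with $m \le i-2$, touching only $T'_{i-1},T'_i,\dots,T'_{k'}$.

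Granting locality, the base case $i=2$ is immediate: PUSH$(2)$ is called only from the main push routine and only when the cell just written on $T'_1$ carries the end marker $\tilde{\cdot}$, i.e.\ exactly when $T'_1$ has just become full, so (E$_2$) is vacuous; since $T'_1$ is then full, the \textbf{while}-loop of PUSH$(2)$ sweeps every (necessarily real) symbol of $T'_{1,R}$ into a dummy cell of $T'_2$ and overwrites its old position with $\bot$, leaving $T'_{1,L}$ full and $T'_{1,R}$ empty, i.e.\ $T'_1$ balanced, after which any follow-up PUSH$(3)$ leaves $T'_1$ alone by locality — this is (X$_2$). For the inductive step, note that PUSH$(i)$ is invoked only from inside PUSH$(i-1)$, at its guard ``if $T'_{i-1}$ is full''; by (E$_{i-1}$) that run of PUSH$(i-1)$ began with $T'_1,\dots,T'_{i-3}$ balanced and $T'_{i-2}$ full, and its \textbf{while}-loop then empties $T'_{i-2,R}$ into $T'_{i-1}$, which leaves $T'_{i-2}$ balanced while writing nothing to $T'_1,\dots,T'_{i-3}$, so at the call site $T'_1,\dots,T'_{i-2}$ are all balanced and $T'_{i-1}$ is full — this is (E$_i$). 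Then (X$_i$) follows exactly as (X$_2$) did: PUSH$(i)$'s \textbf{while}-loop turns the full $T'_{i-1}$ into a balanced tape, PUSH$(i)$ never modifies $T'_1,\dots,T'_{i-2}$ by locality, and those were balanced on entry by (E$_i$), while a trailing PUSH$(i+1)$ disturbs none of $T'_1,\dots,T'_{i-1}$, again by locality.

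The main obstacle, as already flagged, is getting the strengthened invariant right — recognizing that (X$_i$) must be carried together with the precondition (E$_i$), whose truth at a given invocation is produced by the partial execution of the caller up to its recursive call rather than by anything that happened ``after the previous PUSH$(i)$''. A secondary, more clerical point to pin down is that the steps ``rotate the tape heads until $\dots$'' never permute the symbols stored on a lower-level tape (which one can guarantee by only rotating through full periods of the shorter tapes), so that ``balanced'' is indeed preserved as a property of tape contents throughout the argument.
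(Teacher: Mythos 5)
The paper asserts Fact~\ref{fact1} without giving a proof, so there is no ``paper's own proof'' to compare against; what you have done is supply the argument the authors left implicit. Your argument is correct and well-structured. Strengthening the statement to the pre/post pair (E$_i$), (X$_i$) and inducting on the level is the natural way to close the induction, and your \emph{locality} observation --- PUSH$(i)$ together with its nested calls writes only into $T'_{i-1,R}$ and $T'_i,\dots,T'_{k'}$, never any $T'_m$ with $m\le i-2$ --- is exactly the fact that makes the induction go through. In effect (E$_i$) is a refinement of what the paper establishes in parts (a)--(c) of its Claim, where the authors, circularly-looking but in fact simultaneously, cite Fact~\ref{fact1} inside the proof of (c); your organization makes the implicit simultaneous induction explicit and honest.

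Two minor points to tighten. First, you call (E$_2$) ``vacuous'': only the balancedness clause of (E$_2$) is vacuous (there is no $T'_0$); the conjunct ``$T'_1$ is full at entry'' is the substantive part, which you do in fact verify from the guard on the $\tilde{\cdot}$-marked cell but mislabel. Second, for the balance conclusion alone you do not actually need the head-alignment fact you worry about in your final paragraph: the loop guard is ``while the current cell of $T'_{i-1,R}$ is non-dummy'', and since $T'_{i-1,R}$ is full on entry of PUSH$(i)$ the loop cleans all of it after exactly $|T'_{i-1,R}|$ steps from whatever starting offset, leaving $T'_{i-1,L}$ full and $T'_{i-1,R}$ empty. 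Head alignment (the paper's Claim~(b)) is needed to guarantee that the symbols land in the right cells of $T'_i$ --- i.e., for Claim~(e) on content fidelity --- but not for the pure counting statement that $T'_{i-1}$ ends up balanced, so it is tangential to Fact~\ref{fact1} and you can safely drop that caveat here.
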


\paragraph{Stack Pop} To pop the top element of $\PD$, $M'$ rotates the head of $T_1'$ to the last non-dummy element\footnote{A non-dummy cell is the last one iff it carries the rightmost mark $\tilde{\cdot}$ or its next cell is dummy. To avoid explictly pre-reading the next cell, we apply a one-step \emph{delayed append} trick: initially, pop the leftmost queue symbol and store it in the finite-state controller $Q$, without appending. Thereafter, at each step (i) pop the next leftmost symbol, and (ii) append a symbol determined by the two most recently popped symbols. In this way, the appended symbol can depend on the leftmost two queue symbols, without explicit pre-reading.
}, outputs it, and replaces it with $\bot$. 

If $T'_1$ becomes empty, then the last $|T'_1|/2$ non-dummy elements of $T'_{2}$ are transferred into the first $|T'_1|/2$ cells of $T'_{1}$ (i.e., POP(2) in Algorithm \ref{alg:pop}). Specifically, in POP(2), the last $|T'_1|/2$ non-dummy elements of $T_2'$ are copied into $T'_{1,L}$, after which those cells of $T_2'$ are cleaned. The main challenge to implement POP(2) is that we cannot directly identify the last $|T'_1|/2$ non-dummies of $T_2'$. To overcome this difficulty, we employ the buffer queue $T_{2,B}'$ to temporarily store a block of elements of $T'_2$ during the scan. The intuition is as follows: we scan $T_2'$ from left to right, and move each encountered symbol to $T'_{1,L}$; whenever a cell of $T'_{1,L}$ is about to be overwritten, its previous content is diverted to $T_{2,B}'$. In this way, we maintain the invariant that $T_{2,B'}\circ T_{1,L}' \circ T_{2}'$ always equals the original content of $T_2'$. In particular, when we reach the first dummy of $T_2'$, we have $T_{2,B'}\circ T_{1,L}'$ equals the original content of $T_2'$. Formally,
\begin{itemize}
\item First, rotate the head of $T_2'$ to the leftmost cell; (one can see that the $T_{1,L}'$ and $T_{2,B}'$ heads also locates at the leftmost cell at this time, since $|T_{1,L}'|$ divides $|T_2'|$ and $|T_{2,B}'|=|T_{2}|$.)
\item As the head of $T_2'$ goes from left to right until reaches the first $\bot$, 
\begin{enumerate}
\item If the currect cell of $T'_{1,L}$ store a non-dummy, then move this element to $T_{2,B}'$;
\item The current element of $T_2'$ is copied to the currect cell of $T'_{1,L}$, and then cleaned.
\end{enumerate}
\item Finally, the content in $T'_{2,B}$ is copied to $T'_{2}$, and then cleaned.
\end{itemize} 

If $T_2'$ becomes empty, the procedure recurs to POP(3) and so on.


\begin{fact}\label{fact2}
Right after POP$(i)$, all of $T_1',\cdots,T_{i-1}'$ are balanced.
\end{fact}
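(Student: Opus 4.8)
The plan is to prove the invariant by induction on the recursion depth, tracking exactly which cells of each tape are real versus dummy before and after each call to \textsc{Pop}$(i)$. First I would record the precondition under which \textsc{Pop}$(i)$ is invoked: it is called from \textsc{Pop}$(i-1)$ (or from the top-level pop, when $i=2$) precisely at the moment $T'_{i-1}$ has just become \emph{empty}. So the entry state of \textsc{Pop}$(i)$ has $T'_{i-1}$ empty (hence in particular $T'_{i-1,L}$ empty and $T'_{i-1,R}$ empty), while by the inductive hypothesis applied to the \emph{previous} nested returns, all of $T'_1,\dots,T'_{i-2}$ are already balanced at that instant — this needs to be checked as part of the induction, but it follows because those tapes were left balanced the last time control passed through them and have not been touched since. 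The base observation is that $T'_{i,L}$ and $T'_{i,B}$ start the scan with their heads at the leftmost cell: since $|T'_{i-1,L}| = \lceil s^{1/k'}\rceil^{i-1}$ divides $|T'_i| = 2\lceil s^{1/k'}\rceil^{i}$ and $|T'_{i,B}| = |T'_i|$, after the $T'_i$ head is rotated to its leftmost cell all three heads are aligned, so the ``current cell'' bookkeeping in the \texttt{while} loop is consistent across the three tapes.

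Next I would verify the loop invariant stated informally in the text: as the $T'_i$ head sweeps left-to-right until the first $\bot$, the concatenation $T'_{i,B}\circ T'_{i-1,L}\circ T'_i$ (reading only the real symbols, in order) equals the original real content of $T'_i$. This is a one-line induction on loop iterations: each iteration moves one real symbol out of $T'_i$ into the current cell of $T'_{i-1,L}$, first evacuating whatever real symbol sat there into $T'_{i,B}$; since exactly one real symbol is consumed from $T'_i$ and exactly one is produced into $T'_{i-1,L}$ (with at most one displaced into $T'_{i,B}$), the multiset and order are preserved. When the loop terminates at the first dummy of $T'_i$, the tape $T'_i$ holds only dummies (it is now empty), $T'_{i-1,L}$ is full with the top $|T'_{i-1,L}|$ real symbols of the old $T'_i$, and $T'_{i,B}$ holds the remaining real symbols of the old $T'_i$ in order. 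The final copy-back step writes $T'_{i,B}$ into $T'_i$ from the left and cleans $T'_{i,B}$: since $|T'_{i,B}| = |T'_i|$ and the old real content of $T'_i$ had length $\le |T'_i|$, this leaves $T'_i$ as a real block followed by a dummy block — i.e. in the required ``reals-then-dummies'' arrangement — and $T'_{i,B}$ empty again.

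It remains to assemble the conclusion. Immediately after this loop-plus-copyback (and before any possible recursive call \textsc{Pop}$(i+1)$), we have: $T'_{i-1,L}$ full and $T'_{i-1,R}$ empty — so $T'_{i-1}$ is balanced — while $T'_1,\dots,T'_{i-2}$ were balanced on entry and have not been touched, hence are still balanced. If $T'_i$ is now nonempty the procedure returns and Fact 2 holds for this call with all of $T'_1,\dots,T'_{i-1}$ balanced; if $T'_i$ is empty we recurse into \textsc{Pop}$(i+1)$, and by the inductive hypothesis that call returns with $T'_1,\dots,T'_i$ all balanced, which in particular gives $T'_1,\dots,T'_{i-1}$ balanced, as claimed. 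The top-level pop then inherits the $i=2$ case: after removing the last real symbol of $T'_1$, if $T'_1$ became empty we call \textsc{Pop}$(2)$, which restores $T'_1$ to balanced; if not, $T'_1$ trivially still has the reals-then-dummies shape. The main obstacle I anticipate is purely the head-alignment bookkeeping — making precise that the cyclic single-cell shifts of the synchronous model keep the three interacting tape heads ($T'_i$, $T'_{i-1,L}$, $T'_{i,B}$) in lockstep throughout the scan, which hinges entirely on the divisibility relations among the fixed queue lengths $\lceil s^{1/k'}\rceil^{i}$; once that is nailed down, the symbol-counting argument for the loop invariant and the balancedness conclusion are routine. (The proof of Fact 1 is the mirror image: \textsc{Push}$(i)$ is entered when $T'_{i-1}$ has just become full, it moves the $|T'_{i-1,L}|$ real symbols of $T'_{i-1,L}$ into the leading dummy cells of $T'_i$, cleans $T'_{i-1,L}$ and swaps the roles of $T'_{i-1,L}$ and $T'_{i-1,R}$, leaving $T'_{i-1}$ balanced, with the same divisibility-based head alignment and the same inductive packaging.)
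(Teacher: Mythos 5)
Your proof is correct and is essentially the paper's intended argument: the paper states Fact~\ref{fact2} without explicit proof, but it follows from precisely the observations you make — the scan-plus-copyback in POP$(i)$ refills $T'_{i-1,L}$ to full (using property~(a) so that $T'_i$ has at least $|T'_{i-1,L}|$ real symbols) while $T'_{i-1,R}$ stays empty, hence $T'_{i-1}$ is balanced, and the shallower levels $T'_1,\dots,T'_{i-2}$ were already balanced from the nested calls that produced the current invocation and are not touched by POP$(i)$, with a possible recursive call to POP$(i+1)$ handled by the same reasoning one level up.
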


What remains is to analyze the correctness and efficiency of this simulation. 
\begin{claim} During the simulation, we always have the following properties.
\begin{itemize}
\item[(a)] The numbers of dummies and non-dummies of $T_i'$ are both multiples of $|T'_{i-1,L}|$, for any $i=2,\cdots,k'$.
\item[(b)] When the head of $T'_{i}$ is on the first dummy cell, the head of $T'_{i-1,L}$ is on the leftmost cell.
Similarly, when the head of $T'_{i}$ is on the last non-dummy cell, the head of $T'_{i-1,L}$ is on the rightmost cell.
\item[(c)] At the beginning of the stack push simulation, each $T'_{i}$ contains at least $|T'_{i-1,L}|$ dummies. Consequently, combining with (a) and (b), it implies that PUSH($i$) is doable whenever it is called.
\item[(d)] If $T_{i}'$ is empty, then all higher levels $T_{i+1}',\cdots,T_{k'}'$ are empty as well.
\item[(e)] The concatenation of non-dummy contents of $T'_{k'},\cdots,T'_{1}$ (in this order) equals exactly the content of $\PD$ (we start with the bottom element).
\end{itemize}
Consequently, $T_1',\cdots,T_{k'}'$ faithfully simulate the stack $\PD$. Precisely, whenever $\PD$ pops the top element, $M'$ can obtain it as well.
\end{claim}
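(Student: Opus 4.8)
The plan is to prove properties (a)--(e) by a simultaneous induction on the number of simulated stack operations, with (a)--(d) serving as loop invariants that make the transfer procedures well-defined, and (e) being the correctness statement we ultimately want. At the base case (only the bottom element in $\PD$), every $T_i'$ is empty except $T_{1,L}'$, so all five properties hold trivially: the dummy/non-dummy counts in each $T_i'$ ($i\geq 2$) are $|T_i'|$ and $0$, both multiples of $|T_{i-1,L}'|=\lceil s^{1/k'}\rceil^{i-1}$ since $|T_{i-1,L}'|$ divides $|T_i'|=\lceil s^{1/k'}\rceil^{i}$; (b) holds because all heads sit at the leftmost cell; (c)--(e) are immediate. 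The inductive step then analyzes one stack push and one stack pop separately, tracing through Algorithms~\ref{alg:push} and \ref{alg:pop}.

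For the push step, I would first argue property (c): by the inductive hypothesis each $T_i'$ has at least $|T_{i-1,L}'|$ dummies, and since by (a) the number of dummies in $T_i'$ is a positive multiple of $|T_{i-1,L}'|$, the block of dummies is large enough to receive the entire content of $T_{i-1,L}'$ (which has length $|T_{i-1,L}'|$) when \textsc{Push}$(i)$ is invoked. Property (b) is exactly what guarantees the heads are aligned so that the copy in the \textbf{while} loop transfers $T_{i-1,L}'$ into the leading dummy block of $T_i'$ without drift; here one uses that $|T_{i-1,L}'|$ divides $|T_i'|$ and $|T_{i-1,B}'|=|T_i'|$ so the cyclic head positions stay synchronized. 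After \textsc{Push}$(i)$ returns, Fact~\ref{fact1} says $T_1',\dots,T_{i-1}'$ are balanced, which re-establishes (a), (b) for those levels and increases the dummy count of $T_i'$ by $|T_{i-1,L}'|$, preserving (a) at level $i$; levels above $i$ are untouched unless the recursion propagated further, in which case the same reasoning applies. Property (d) is maintained because a push only ever moves symbols upward, so it cannot create a nonempty level above an empty one; and (e) is maintained because the net effect of the whole cascade is to insert $a$ at the real/dummy boundary of $T_1'$ while merely relocating already-present symbols between levels, leaving the concatenation $T_{k'}'\circ\cdots\circ T_1'$ equal to the new stack content.

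For the pop step the argument is symmetric but the delicate point is \textsc{Pop}$(i)$ and the buffer $T_{i,B}'$. I would state and verify the invariant claimed in the text: throughout the left-to-right scan of $T_i'$, the concatenation $T_{i,B}'\circ T_{i-1,L}'\circ T_i'$ (restricted to non-dummy content, read in the natural order) equals the original non-dummy content of $T_i'$ before the call. This invariant holds initially (buffer empty, $T_{i-1,L}'$ empty, $T_i'$ full of its original content) and is preserved by one iteration of the \textbf{while} loop: the symbol read from $T_i'$ is appended to $T_{i-1,L}'$ and the symbol it displaces (if any) is appended to $T_{i,B}'$, a pure rotation of symbols across the three tapes. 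When the scan hits the first dummy of $T_i'$, the invariant gives $T_{i,B}'\circ T_{i-1,L}'$ = original content of $T_i'$, so $T_{i-1,L}'$ now holds exactly the last $|T_{i-1,L}'|$ non-dummies of $T_i'$ (using (a) to know the count is a multiple and the scan consumed a whole number of $|T_{i-1,L}'|$-blocks) and the buffer holds the rest; copying $T_{i,B}'$ back into $T_i'$ and cleaning it restores $T_i'$ to that remainder. Head alignment again follows from (b) and the divisibility/length relations. Fact~\ref{fact2} then gives (a),(b) for the lower levels, (d) is preserved because the pop cascade only propagates upward exactly when a level becomes empty and by the old (d) everything above it was already empty, and (e) follows because the cascade's net effect is to remove the topmost symbol of $T_1'$ while relocating deeper symbols downward consistently.

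The main obstacle I anticipate is the bookkeeping around cyclic head positions and the divisibility relations --- making precise why, at the moment \textsc{Push}$(i)$ or \textsc{Pop}$(i)$ is entered, the heads of $T_i'$, $T_{i-1,L}'$, and $T_{i,B}'$ are all simultaneously at the ``right'' place so that the \textbf{while} loops copy aligned blocks rather than shifted ones. This is exactly the content of (b), and proving (b) requires carefully tracking how many cyclic rotations occur during the ``rotate until...'' steps and the loop bodies, together with the facts $|T_{i-1,L}'| \mid |T_i'|$ and $|T_{i,B}'| = |T_i'|$. Once (b) is nailed down the rest is routine. I would therefore front-load the proof with a lemma isolating the head-synchronization arithmetic, then feed it into the induction for (a)--(e), and finally observe that (e) immediately yields the last sentence of the claim: whenever $\PD$ pops its top element, that element is the last non-dummy of $T_1'$, which is precisely what Algorithm~\ref{alg:pop} outputs.
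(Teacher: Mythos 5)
Your proposal is correct and follows essentially the same route as the paper: a (simultaneous) induction over stack operations with (a)--(d) as maintained invariants, Facts~\ref{fact1} and \ref{fact2} used to re-establish balance after transfers, the buffer invariant $T'_{i,B}\circ T'_{i-1,L}\circ T'_i$ for the pop, and (e) as the final correctness statement. You are actually somewhat more careful than the paper about the head-synchronization arithmetic, which the paper compresses into ``(b) is immediate from (a)''; one small slip is that in the push discussion you invoke $|T'_{i-1,B}|=|T'_i|$, whereas the buffer is only used in POP and the relevant identity is $|T'_{i,B}|=|T'_i|$ together with $|T'_{i-1,L}|\mid|T'_i|$.
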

\begin{proof}
\textit{(a).} Intially, $T_i'$ have 0 non-dummies and $2(\lceil s^{1/k'}\rceil)^i=2 \lceil s^{1/k'}\rceil\cdot |T'_{i-1,L}|$ dummies. The numbers changes only if PUSH(i) or POP(i) is executed. Through PUSH(i), $|T'_{i-1,L}|$ dummies are changed to non-dummies, or $T_i'$ becomes balanced. Through POP(i), $|T'_{i-1,L}|$ non-dummies are changed to dummies, or $T_i'$ becomes balanced.

\textit{(b).} Immediate from (a).

\textit{(c).} At the start of a stack push, $T'_{i}$ cannot be full, since otherwise PUSH($i+1$) would be executed in the previous execution of PUSH($i$) and would balance $T'_{i}$. Combining with (a) yields (c).  

\textit{(d).} It suffices to show that if $T_{i}'$ is empty then $T_{i+1}'$ is empty. Suppose $T_{i}'$ is empty at a time, then either
\begin{itemize}
\item  $T_{i}'$ has been empty since the beginning of the simulation, where $T'_{i+1},\cdots,T_{k'}'$ have been empty as well; or
\item  $T_i'$ becomes empty after POP($i$). Right after this POP($i$), $T_{i+1}'$ should be empty as well since otherwise POP($i+1$) would be executed which makes $T'_{i}$ balanced.
\end{itemize}

\textit{(e).} From the description of POP$(i)$, one can see that right after POP($i$), the concatenation of non-dummy contents of $T'_{i}$ and $T'_{i-1,L}$ (in this order) equals the non-dummy content of $T_i'$ before POP($i$). Now, (e) is immediate.
\end{proof}

In the following, we analyze the computational time cost. The simulation of one step (without calling PUSH or POP) costs $O(|T_1'|)=O(s^{1/k'})$. An execution of PUSH$(i)$ or POP($i$) (without calling PUSH$(i+1)$ or POP($i+1$)) costs $O(|T_i'|)=O(s^{i/k'})$. Since after PUSH$(i)$ or POP$(i)$, all levels $T_1',\cdots,T_{i-1}'$ are balanced, one can see that the number of stack operations between successive calls to PUSH$(i)$ or POP$(i)$ is at least as $|T_{i-1}'|/2=\Omega(s^{(i-1)/k'})$. So the total simulation time is bounded by
\[
t\cdot O(|T_1'|) +\sum_{i=1}^{k'} O(|T_i'|)\cdot \frac{t}{\Omega(s^{(i-1)/k'})}=O(t\cdot s^{1/k'}).
\]
\warn{Finally, the total space used by the multi-queue machine is 
\[
2k\cdot \sum_{i=1}^{k'} 4\times \lceil s(n)\rceil^{i/k'}\leq 8k\cdot \sum_{i=1}^{k'} (s(n)+1)^{i/k'}\leq 8k\cdot 2(s(n)+1)=O(ks(n)).\qedhere
\]}
\end{proof}
\begin{remark}
Our proof builds on the high-level idea of Theorem 3.2 in \citep{queue1}, namely the use of queues of geometrically increasing size. However, their model allows each queue to remain idle in a step (i.e., neither pop nor push symbols), whereas we impose the stricter \emph{synchronous condition} that every queue must pop and append exactly one symbol at every step. This requirement significantly complicates the simulation: when transferring data between adjacent levels, the two participating heads must arrive at the correct cells simultaneously. To resolve the head-alignment issue, our simulation departs from that of \citet{queue1} in a crucial way, including the use of auxiliary queues as buffers and the splitting of each content queue into two halves to regulate head positions. 
\end{remark}

\section{Step II: Efficient simulation of multi-queue TMs by Transformers}\label{sec:step2}
This section proves the following theorem, which together with Theorem~\ref{thm:tmqm} immediately implies Theorem \ref{thm:main0}. Theorem \ref{thm:qmtr} is a generalization of Theorem 4 in \citep{constant}, extending the single-queue construction to the multi-queue setting.
\begin{theorem}\label{thm:qmtr}
Let $\tm$ be a synchronous $K$-queue Turing machine that, on input $x\in\{0,1\}^n$, uses at most \warn{$s_{\max}(n)$ maximum queue length} and runs for at most $t(n)$ steps.
There exists a constant bit-size Transformer with (i) a head-layer product $K$ and (ii) a context window of length $O(s_{\max}(n))$ that, on input $x$, takes $O(t(n))$ CoT steps and then outputs $\tm(x)$.
\end{theorem}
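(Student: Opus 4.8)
The plan is to construct the Transformer so that its CoT tape is an explicit transcript of the $K$-queue machine's computation, where the content appended at each step corresponds exactly to the symbols the queues push. The key structural observation is the one already emphasized in the excerpt: each queue behaves like a circular tape whose head advances one cell per step. So if we think of the CoT token stream as being laid out in a single long sequence, the $j$-th symbol popped from queue $r$ at global step $m$ is precisely the symbol that was appended to queue $r$ exactly $s_r$ steps earlier, where $s_r$ is the (fixed) length of queue $r$. Since the machine is \emph{synchronous}, every queue pops and pushes exactly once per simulated step, so in one ``round'' of the Transformer we emit one token per queue (plus possibly a state/bookkeeping token), and the token emitted for queue $r$ must look back a fixed offset — the fixed queue length — into the CoT. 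This is exactly the fixed-relative-offset attention pattern the paper advertises, and it is why the head-layer product needs to be $K$: we devote one attention head (spread across layers as needed) to each queue's fixed look-back offset.

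Concretely, I would proceed in the following steps. First, fix the encoding: reserve a block of $K$ (or $K+O(1)$) consecutive CoT positions per simulated TM step, the $r$-th of which holds the symbol appended to queue $r$ and a copy of the current control state $q$; the initial segment of the CoT is a padding/initialization phase that writes the input $x$ into the input queue's layout and blanks into the others, using the assumed initial scan over the input. Second, show that the queue lengths $s_r$ are known constants of the construction up to the space bound — here I would let each queue be padded to length $\Theta(s_{\max}(n))$ (or, to match the geometric-offset story, to the geometric values $\lceil s^{1/k'}\rceil^i$), and crucially these are fixed \emph{before} the run in the sense that the Transformer's positional-encoding function $\pos(\cdot)$ only needs to fire at these finitely many offsets; this is where the context window $O(s_{\max}(n))$ is used and where we must check that no query ever needs to look back further than the window. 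Third, implement the transition: at the position where queue $r$'s next symbol is to be emitted, one head attends $s_r$ positions back to recover the popped symbol $\sigma_r$, another (local, offset $O(1)$) head recovers the current state $q$ and the already-recovered popped symbols of queues $0,\dots,r-1$ from the current round, and then a constant-size feed-forward network computes $\delta$ and outputs the new symbol to append (and, on the last queue of the round, the new state). Fourth, handle halting: once $q_{halt}$ appears, switch to an output-emission mode that reads off $\tm(x)$ from the appropriate queue.

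The correctness argument is then an induction on simulated steps: assuming the CoT through round $m$ faithfully encodes the configuration of $\tm$ after $m$ steps (state plus the content of all $K$ queues, read off as the last $s_r$ appended symbols of queue $r$), one checks that the look-back of offset $s_r$ recovers exactly the symbol at the head of queue $r$, that $\delta$ is applied correctly, and that the newly appended symbols extend each queue's encoding by one, preserving the invariant. The $\hardmax$-is-a-proxy-for-$\softmax$ and no-layer-norm simplifications are imported wholesale from \citep{tengyu,constant}; the bit-size bound is immediate because $|\V|$, $d$, $H$, $L$, and the precision are all constants depending only on $\tm$, not on $n$ — the only thing that grows with $n$ is the queue length (hence the context window), not the parameters.

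I expect the main obstacle to be the bookkeeping around \emph{head alignment within a round and across the padding phase}: making sure that ``look back $s_r$ positions'' lands on the correct cell requires that the per-round block size and the queue lengths be chosen so that $s_r$ is always an exact multiple of (round size), analogous to the divisibility conditions $|T'_{i-1,L}|$ divides $|T'_i|$ in the proof of Theorem~\ref{thm:tmqm}; if the round size is $K+c$ for some bookkeeping constant $c$, one must pad each queue length to a multiple of $K+c$, and verify this only inflates the window by a constant factor and keeps all offsets within the window. A secondary subtlety is the delayed-append trick (footnote in the proof of Theorem~\ref{thm:tmqm}) needed so that the symbol appended can depend on symbols only revealed after the pop; in the Transformer this is painless because the feed-forward network at a given position already sees the attention outputs, but one must make sure the positional offsets are set up so that all needed popped symbols of the current round are available by the time the round's last token (carrying the new state) is produced. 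None of this is deep, but it is the part where a careless constant would break the construction, so it deserves the most care.
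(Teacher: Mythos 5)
Your high-level observations are right — the synchronous condition freezes each queue's length so that every pop looks a fixed relative distance back, and the head-layer product $K$ arises from dedicating one head per queue — but your token layout diverges from the paper's in a way that leaves a genuine hole. The paper packs all $K$ appended symbols together with the state (and a parity bit) into a \emph{single} vocabulary element, so one CoT token encodes one full step of the queue machine; the $K$ heads, spread as $H$ per layer over $L$ layers with $HL=K$, each attend back exactly $s_r(n)-1$ positions from the current token to fetch the current front of one queue, and a single feed-forward layer then applies $\delta$ to all $K$ recovered symbols at once. You instead propose $K$ (or $K+O(1)$) tokens per simulated step, one per queue. Under that layout, $\delta$ at step $i$ still needs \emph{all} $K$ front symbols simultaneously, but in your encoding those $K$ symbols sit at $K$ different look-back offsets from each token of the round. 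Your proposed two heads per token (one attending $s_r$ back for the current queue's pop, one local) can recover at most $\sigma_0,\dots,\sigma_r$ by the time queue $r$'s token is to be emitted; $\sigma_{r+1},\dots,\sigma_{K-1}$ are not yet available, so $\delta$ cannot be evaluated there. Any repair — a dedicated "computation token" per round at which all $K$ heads fire, or per-token variable offsets within a round — either rediscovers the paper's one-big-token-per-step layout with extra padding, or breaks the fixed-relative-offset discipline that the round-alignment worry in your last paragraph is precisely about. The paper's packing makes the alignment concern disappear: there is no round block to divide into, and the offsets are literally the queue lengths.

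Two smaller points. First, the geometric padding $\lceil s^{1/k'}\rceil^i$ you mention belongs to Theorem~\ref{thm:tmqm}, not here; in Theorem~\ref{thm:qmtr} the queue sizes $s_r(n)$ are whatever the given synchronous machine happens to have, and the only requirement is that they are bounded by the context window. Second, the paper explicitly handles the boundary case where the look-back offset $s_r(n)-1$ would land before position $1$ (so the virtual pop should return $\bot$): it adds a parity bit to the vocabulary and uses it in $\FF^{L-1}$ to detect this situation. Your sketch gestures at an initialization phase but does not address how a fixed-offset head distinguishes a genuine early token from an out-of-range look-back; without some such mechanism the induction base case does not close.
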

The complete proof is given in Appendix~\ref{app:qmtr}; below we sketch the main idea. Let $\tm$ be such a synchronous $K$-queue Turing machine. Because each of the $K$ queues pops exactly one element and also appends exactly one element at each step, their sizes are fixed during the execution of $\tm$. Let $s_0(n),s_1(n),\cdots,s_{K-1}(n)$ denote the queue sizes. \warn{Let $s_{\max}(n)\geq \max_{r} s_r(n)$}. 

We now construct a constant bit-size Transformer $\tf$ with a context window of length $s_{\max}(n)$ to faithfully simulate $\tm$ step by step. The intuition is as follows:
\begin{itemize}[leftmargin=16pt]
\item We view each queue as a tape that is (i) infinite to the right and bounded on the left, and (ii) equipped with two tape heads, namely the front head and the rear head. The queue content corresponds to the tape segment between the two heads. Initially, the rear head of the $r$-th tape is positioned at cell $n$, while the front head is at $n-s_r(n)+1$\footnote{Here, for convenience, we allow the front head to point to negative indices; whenever this occurs, the corresponding cell is assumed to contain the blank symbol $\bot$.}. At each step, the front head first reads the current symbol, both heads move exactly one cell to the right, and then the rear head writes a symbol.
\item We represent the $K$ tapes by stacking them cellwise: the $i$-th token in the Transformer's context records the $i$-th cells of all the $K$ tapes, with the cells pointed by the rear heads corresponding to the newest token. Since the context window length is $s_{\max}(n)= \max_{r} s_r(n)$, it can store the entire content of the queues. In addition, we set the vocabulary of the Transformer to be $\V=\Sigma^K\times Q$, so that a token can also track the $\tm$ state information.
\item The transition function of $\tm$ takes as input the leftmost elements of the queues together with the current state, and outputs the next state and symbols. To simulate this, we partition the $K$ queues into $L$ groups of size $H:=K/L$. In each decoder layer $\ell=0,\cdots,L-1$, each of the $H$ attention heads retrieves the leftmost symbol of one queue in the $\ell$-th group from previous tokens. In addition, the current state can be retrieved from the current token. Subsequently, a feed-forward network is applied to implement the transition function $\delta$. 
\end{itemize}

\begin{remark}
\warn{Following \cite{constant}, we employ unlearnable relative positional encoding that evolves with $n$ as defined in \Cref{eq:pos}. To simulate the given TM on longer inputs, we only need to adjust the relative PE according to \Cref{eq:pos}; all learnable parameters, including their precision, remain unchanged. Thus, for a given TM, there is a single learnable parameter vector that is reused for all input lengths.

Our construction is of constant bit-size, since the notion of bit-size refers to the number and precision of \emph{learnable parameters}, and the positional encoding is treated as part of the model architecture rather than as learnable parameters.}
\end{remark}

\section{Conclusion and Future Directions}\label{sec:conclusion}
This paper proves that constant bit-size Transformers can simulate multi-tape TMs with an optimal $O(s(n))$ context window and only $O(s(n)^c)$ CoT steps per simulated step, where $c>0$ can be made arbitrarily small. Beyond the theoretical advance, our construction suggests geometric-offset attention as a promising architectural direction to enhance reasoning efficiency: each CoT token can be generated in $O(1)$ time, demonstrating that dense quadratic attention is not essential for efficient universality. 
The key technical contribution is a more efficient simulation of multi-tape TMs via multi-queue TMs. 

We propose three future directions: (i) Can the per-step overhead be further reduced to $O(1)$? (ii) Our construction uses a nonstandard relative positional encoding that assumes the space bound is known in advance; how can positional encodings be designed to adapt dynamically when the space bound is unknown? (iii) \warn{Our results focuses only the expressiveness. It remains an open question whether standard training protocols can actually discover such behavior; and empirical experiments are still to be done.}

\bibliography{main.bib}

\begin{thebibliography}{53}
\providecommand{\natexlab}[1]{#1}
\providecommand{\url}[1]{\texttt{#1}}
\expandafter\ifx\csname urlstyle\endcsname\relax
  \providecommand{\doi}[1]{doi: #1}\else
  \providecommand{\doi}{doi: \begingroup \urlstyle{rm}\Url}\fi

\bibitem[Aggarwal and Welleck(2025)]{aggarwal2025l1}
P.~Aggarwal and S.~Welleck.
\newblock L1: Controlling how long a reasoning model thinks with reinforcement
  learning.
\newblock \emph{arXiv preprint arXiv:2503.04697}, 2025.

\bibitem[Alberto~Manzi(2025)]{gpt5-benchmark}
a.~S.~K. Alberto~Manzi.
\newblock Evaluating \& ranking gpt-5 reasoning ability, 2025.

\bibitem[Anthropic(2024)]{anthropic2024claude3}
Anthropic.
\newblock The claude 3 model family: Opus, sonnet, haiku, 2024.

\bibitem[Arora and Barak(2009)]{arora2009computational}
S.~Arora and B.~Barak.
\newblock \emph{Computational complexity: a modern approach}.
\newblock Cambridge University Press, 2009.

\bibitem[Barcelo et~al.(2024)Barcelo, Kozachinskiy, Lin, and
  Podolskii]{barcelological}
P.~Barcelo, A.~Kozachinskiy, A.~W. Lin, and V.~Podolskii.
\newblock Logical languages accepted by transformer encoders with hard
  attention.
\newblock In \emph{The Twelfth International Conference on Learning
  Representations}, 2024.

\bibitem[Bavandpour et~al.(2025)Bavandpour, Huang, Rofin, and
  Hahn]{bavandpourlower}
A.~A. Bavandpour, X.~Huang, M.~Rofin, and M.~Hahn.
\newblock Lower bounds for chain-of-thought reasoning in hard-attention
  transformers.
\newblock In \emph{Forty-second International Conference on Machine Learning},
  2025.

\bibitem[Bhattamishra et~al.(2020)Bhattamishra, Patel, and Goyal]{bhat}
S.~Bhattamishra, A.~Patel, and N.~Goyal.
\newblock On the computational power of transformers and its implications in
  sequence modeling.
\newblock In \emph{Proceedings of the 24th Conference on Computational Natural
  Language Learning}, pages 455--475, 2020.

\bibitem[Chen et~al.(2024{\natexlab{a}})Chen, Peng, and Wu]{lijie}
L.~Chen, B.~Peng, and H.~Wu.
\newblock Theoretical limitations of multi-layer transformer.
\newblock \emph{CoRR}, abs/2412.02975, 2024{\natexlab{a}}.

\bibitem[Chen et~al.(2025)Chen, Xu, An, Wang, Zhang, Chen, Liang, Wei, Liang,
  Xiao, et~al.]{chen2025powerattention}
L.~Chen, D.~Xu, C.~An, X.~Wang, Y.~Zhang, J.~Chen, Z.~Liang, F.~Wei, J.~Liang,
  Y.~Xiao, et~al.
\newblock Powerattention: Exponentially scaling of receptive fields for
  effective sparse attention.
\newblock \emph{arXiv preprint arXiv:2503.03588}, 2025.

\bibitem[Chen et~al.(2024{\natexlab{b}})Chen, Xu, Liang, He, Pang, Yu, Song,
  Liu, Zhou, Zhang, et~al.]{chen2024not}
X.~Chen, J.~Xu, T.~Liang, Z.~He, J.~Pang, D.~Yu, L.~Song, Q.~Liu, M.~Zhou,
  Z.~Zhang, et~al.
\newblock Do not think that much for 2+ 3=? on the overthinking of o1-like
  llms.
\newblock \emph{arXiv preprint arXiv:2412.21187}, 2024{\natexlab{b}}.

\bibitem[Chiang et~al.(2023)Chiang, Cholak, and Pillay]{chiang2023tighter}
D.~Chiang, P.~Cholak, and A.~Pillay.
\newblock Tighter bounds on the expressivity of transformer encoders.
\newblock In \emph{International Conference on Machine Learning}, pages
  5544--5562. PMLR, 2023.

\bibitem[Feng et~al.(2023)Feng, Zhang, Gu, Ye, He, and Wang]{feng}
G.~Feng, B.~Zhang, Y.~Gu, H.~Ye, D.~He, and L.~Wang.
\newblock Towards revealing the mystery behind chain of thought: a theoretical
  perspective.
\newblock \emph{Advances in Neural Information Processing Systems},
  36:\penalty0 70757--70798, 2023.

\bibitem[Feng et~al.(2025)Feng, Fang, Ma, and Wang]{feng2025efficient}
S.~Feng, G.~Fang, X.~Ma, and X.~Wang.
\newblock Efficient reasoning models: A survey.
\newblock \emph{arXiv preprint arXiv:2504.10903}, 2025.

\bibitem[Gemini et~al.(2023)Gemini, Anil, Borgeaud, Alayrac, Yu, Soricut,
  Schalkwyk, Dai, Hauth, Millican, et~al.]{team2023gemini}
T.~Gemini, R.~Anil, S.~Borgeaud, J.-B. Alayrac, J.~Yu, R.~Soricut,
  J.~Schalkwyk, A.~M. Dai, A.~Hauth, K.~Millican, et~al.
\newblock Gemini: a family of highly capable multimodal models.
\newblock \emph{arXiv preprint arXiv:2312.11805}, 2023.

\bibitem[Guo et~al.(2025)Guo, Yang, Zhang, Song, Zhang, Xu, Zhu, Ma, Wang, Bi,
  et~al.]{deepseek}
D.~Guo, D.~Yang, H.~Zhang, J.~Song, R.~Zhang, R.~Xu, Q.~Zhu, S.~Ma, P.~Wang,
  X.~Bi, et~al.
\newblock Deepseek-r1: Incentivizing reasoning capability in llms via
  reinforcement learning.
\newblock \emph{arXiv preprint arXiv:2501.12948}, 2025.

\bibitem[Hahn(2020)]{Hahn20}
M.~Hahn.
\newblock Theoretical limitations of self-attention in neural sequence models.
\newblock \emph{Trans. Assoc. Comput. Linguistics}, 8:\penalty0 156--171, 2020.

\bibitem[Hao et~al.(2024)Hao, Sukhbaatar, Su, Li, Hu, Weston, and
  Tian]{hao2024training}
S.~Hao, S.~Sukhbaatar, D.~Su, X.~Li, Z.~Hu, J.~Weston, and Y.~Tian.
\newblock Training large language models to reason in a continuous latent
  space.
\newblock \emph{arXiv preprint arXiv:2412.06769}, 2024.

\bibitem[Hao et~al.(2022)Hao, Angluin, and Frank]{hao2022formal}
Y.~Hao, D.~Angluin, and R.~Frank.
\newblock Formal language recognition by hard attention transformers:
  Perspectives from circuit complexity.
\newblock \emph{Transactions of the Association for Computational Linguistics},
  10:\penalty0 800--810, 2022.

\bibitem[H{\"{u}}hne(1993)]{queue1}
M.~H{\"{u}}hne.
\newblock On the power of several queues.
\newblock \emph{Theor. Comput. Sci.}, 113\penalty0 (1):\penalty0 75--91, 1993.

\bibitem[Jiang et~al.(2024)Jiang, Li, Zhang, Wu, Luo, Ahn, Han, Abdi, Li, Lin,
  et~al.]{jiang2024minference}
H.~Jiang, Y.~Li, C.~Zhang, Q.~Wu, X.~Luo, S.~Ahn, Z.~Han, A.~H. Abdi, D.~Li,
  C.-Y. Lin, et~al.
\newblock Minference 1.0: Accelerating pre-filling for long-context llms via
  dynamic sparse attention.
\newblock \emph{Advances in Neural Information Processing Systems},
  37:\penalty0 52481--52515, 2024.

\bibitem[Jin et~al.(2024)Jin, Yu, Shu, Zhao, Hua, Meng, Zhang, and
  Du]{jin2024impact}
M.~Jin, Q.~Yu, D.~Shu, H.~Zhao, W.~Hua, Y.~Meng, Y.~Zhang, and M.~Du.
\newblock The impact of reasoning step length on large language models.
\newblock \emph{arXiv preprint arXiv:2401.04925}, 2024.

\bibitem[Khan et~al.(2022)Khan, Naseer, Hayat, Zamir, Khan, and
  Shah]{khan2022transformers}
S.~Khan, M.~Naseer, M.~Hayat, S.~W. Zamir, F.~S. Khan, and M.~Shah.
\newblock Transformers in vision: A survey.
\newblock \emph{ACM computing surveys (CSUR)}, 54\penalty0 (10s):\penalty0
  1--41, 2022.

\bibitem[Li et~al.(2023)Li, Chen, Li, Wang, Li, Chen, and Zhang]{li2023mixed}
C.~Li, Q.~Chen, L.~Li, C.~Wang, Y.~Li, Z.~Chen, and Y.~Zhang.
\newblock Mixed distillation helps smaller language model better reasoning.
\newblock \emph{arXiv preprint arXiv:2312.10730}, 2023.

\bibitem[Li and Wang(2025)]{constant}
Q.~Li and Y.~Wang.
\newblock Constant bit-size transformers are turing complete.
\newblock \emph{arXiv preprint arXiv:2506.12027}, 2025.

\bibitem[Li et~al.(2019)Li, Jin, Xuan, Zhou, Chen, Wang, and
  Yan]{li2019enhancing}
S.~Li, X.~Jin, Y.~Xuan, X.~Zhou, W.~Chen, Y.-X. Wang, and X.~Yan.
\newblock Enhancing the locality and breaking the memory bottleneck of
  transformer on time series forecasting.
\newblock \emph{Advances in neural information processing systems}, 32, 2019.

\bibitem[Li et~al.(2025)Li, Yue, Xu, Jiang, Niu, Lin, Ramasubramanian, and
  Poovendran]{li2025small}
Y.~Li, X.~Yue, Z.~Xu, F.~Jiang, L.~Niu, B.~Y. Lin, B.~Ramasubramanian, and
  R.~Poovendran.
\newblock Small models struggle to learn from strong reasoners.
\newblock \emph{arXiv preprint arXiv:2502.12143}, 2025.

\bibitem[Li et~al.(2024)Li, Liu, Zhou, and Ma]{tengyu}
Z.~Li, H.~Liu, D.~Zhou, and T.~Ma.
\newblock Chain of thought empowers transformers to solve inherently serial
  problems.
\newblock In \emph{The Twelfth International Conference on Learning
  Representations, {ICLR} 2024, Vienna, Austria, May 7-11, 2024}, 2024.

\bibitem[Luo et~al.(2025)Luo, Shen, He, Wang, Liu, Li, Tan, Cao, and
  Tao]{luo2025o1}
H.~Luo, L.~Shen, H.~He, Y.~Wang, S.~Liu, W.~Li, N.~Tan, X.~Cao, and D.~Tao.
\newblock O1-pruner: Length-harmonizing fine-tuning for o1-like reasoning
  pruning.
\newblock \emph{arXiv preprint arXiv:2501.12570}, 2025.

\bibitem[Luong and Lockhart(2025)]{imo}
T.~Luong and E.~Lockhart.
\newblock Advanced version of gemini with deep think officially achieves
  gold-medal standard at the {I}nternational {M}athematical {O}lympiad.
\newblock Google DeepMind Blog, July 2025.

\bibitem[Ma et~al.(2025)Ma, Wan, Yu, Fang, and Wang]{ma2025cot}
X.~Ma, G.~Wan, R.~Yu, G.~Fang, and X.~Wang.
\newblock Cot-valve: Length-compressible chain-of-thought tuning.
\newblock \emph{arXiv preprint arXiv:2502.09601}, 2025.

\bibitem[Merrill and Sabharwal(2023)]{merrill2023logic}
W.~Merrill and A.~Sabharwal.
\newblock A logic for expressing log-precision transformers.
\newblock \emph{Advances in neural information processing systems},
  36:\penalty0 52453--52463, 2023.

\bibitem[Merrill and Sabharwal(2024)]{ashish}
W.~Merrill and A.~Sabharwal.
\newblock The expressive power of transformers with chain of thought.
\newblock In \emph{The Twelfth International Conference on Learning
  Representations, {ICLR} 2024, Vienna, Austria, May 7-11, 2024}, 2024.

\bibitem[Merrill and Sabharwal(2025)]{depth}
W.~Merrill and A.~Sabharwal.
\newblock A little depth goes a long way: The expressive power of log-depth
  transformers.
\newblock \emph{CoRR}, abs/2503.03961, 2025.

\bibitem[OpenAI(2025)]{gpt-oss}
OpenAI.
\newblock Introducing gpt-oss.
\newblock OpenAI Blog, 2025.

\bibitem[Open{AI}(2025)]{gpt5}
Open{AI}.
\newblock Gpt-5 new params and tools, 2025.

\bibitem[P{\'{e}}rez et~al.(2021)P{\'{e}}rez, Barcel{\'{o}}, and
  Marinkovic]{perez}
J.~P{\'{e}}rez, P.~Barcel{\'{o}}, and J.~Marinkovic.
\newblock Attention is turing-complete.
\newblock \emph{J. Mach. Learn. Res.}, 22:\penalty0 75:1--75:35, 2021.
\newblock URL \url{https://jmlr.org/papers/v22/20-302.html}.

\bibitem[Petersen(2013)]{queue-survey}
H.~Petersen.
\newblock Some remarks on lower bounds for queue machines (preliminary report).
\newblock \emph{arXiv preprint arXiv:1310.6398}, 2013.

\bibitem[Qiu et~al.(2024)Qiu, Xu, Bao, and Tong]{prompt}
R.~Qiu, Z.~Xu, W.~Bao, and H.~Tong.
\newblock Ask, and it shall be given: Turing completeness of prompting.
\newblock \emph{CoRR}, abs/2411.01992, 2024.

\bibitem[Rein et~al.(2023)Rein, Hou, Stickland, Petty, Pang, Dirani, Michael,
  and Bowman]{rein2023gpqa}
D.~Rein, B.~L. Hou, A.~C. Stickland, J.~Petty, R.~Y. Pang, J.~Dirani,
  J.~Michael, and S.~R. Bowman.
\newblock Gpqa: A graduate-level google-proof q\&a benchmark.
\newblock \emph{arXiv preprint arXiv:2311.12022}, 2023.

\bibitem[Ribar et~al.(2024)Ribar, Chelombiev, Hudlass-Galley, Blake, Luschi,
  and Orr]{ribarsparq}
L.~Ribar, I.~Chelombiev, L.~Hudlass-Galley, C.~Blake, C.~Luschi, and D.~Orr.
\newblock Sparq attention: Bandwidth-efficient llm inference.
\newblock In \emph{Forty-first International Conference on Machine Learning},
  2024.

\bibitem[Sarikaya(2025)]{sarikaya2025path}
R.~Sarikaya.
\newblock Path to artificial general intelligence: Past, present, and future.
\newblock \emph{Annual Reviews in Control}, 60:\penalty0 101021, 2025.

\bibitem[Shaw et~al.(2018)Shaw, Uszkoreit, and Vaswani]{shaw2018self}
P.~Shaw, J.~Uszkoreit, and A.~Vaswani.
\newblock Self-attention with relative position representations.
\newblock In \emph{Proceedings of the 2018 Conference of the North American
  Chapter of the Association for Computational Linguistics: Human Language
  Technologies, Volume 2 (Short Papers)}, pages 464--468, 2018.

\bibitem[Strobl et~al.(2024)Strobl, Merrill, Weiss, Chiang, and
  Angluin]{strobl2024formal}
L.~Strobl, W.~Merrill, G.~Weiss, D.~Chiang, and D.~Angluin.
\newblock What formal languages can transformers express? a survey.
\newblock \emph{Transactions of the Association for Computational Linguistics},
  12:\penalty0 543--561, 2024.

\bibitem[Su et~al.(2025)Su, Zhu, Xu, Jiao, Tian, and Zheng]{su2025token}
D.~Su, H.~Zhu, Y.~Xu, J.~Jiao, Y.~Tian, and Q.~Zheng.
\newblock Token assorted: Mixing latent and text tokens for improved language
  model reasoning.
\newblock \emph{arXiv preprint arXiv:2502.03275}, 2025.

\bibitem[Sui et~al.(2025)Sui, Chuang, Wang, Zhang, Zhang, Yuan, Liu, Wen,
  Zhong, Chen, et~al.]{sui2025stop}
Y.~Sui, Y.-N. Chuang, G.~Wang, J.~Zhang, T.~Zhang, J.~Yuan, H.~Liu, A.~Wen,
  S.~Zhong, H.~Chen, et~al.
\newblock Stop overthinking: A survey on efficient reasoning for large language
  models.
\newblock \emph{arXiv preprint arXiv:2503.16419}, 2025.

\bibitem[Wu et~al.(2025)Wu, Wang, Ye, Du, Jegelka, and Wang]{wu2025more}
Y.~Wu, Y.~Wang, Z.~Ye, T.~Du, S.~Jegelka, and Y.~Wang.
\newblock When more is less: Understanding chain-of-thought length in llms.
\newblock \emph{arXiv preprint arXiv:2502.07266}, 2025.

\bibitem[Xia et~al.(2025)Xia, Leong, Wang, Li, and Li]{xia2025tokenskip}
H.~Xia, C.~T. Leong, W.~Wang, Y.~Li, and W.~Li.
\newblock Tokenskip: Controllable chain-of-thought compression in llms.
\newblock \emph{arXiv preprint arXiv:2502.12067}, 2025.

\bibitem[Xiao et~al.(2024)Xiao, Tian, Chen, Han, and Lewis]{xiaoefficient}
G.~Xiao, Y.~Tian, B.~Chen, S.~Han, and M.~Lewis.
\newblock Efficient streaming language models with attention sinks.
\newblock In \emph{The Twelfth International Conference on Learning
  Representations}, 2024.

\bibitem[Yang et~al.(2024)Yang, Chiang, and Angluin]{yang2024masked}
A.~Yang, D.~Chiang, and D.~Angluin.
\newblock Masked hard-attention transformers recognize exactly the star-free
  languages.
\newblock \emph{Advances in Neural Information Processing Systems},
  37:\penalty0 10202--10235, 2024.

\bibitem[Yang et~al.(2025{\natexlab{a}})Yang, Srebro, McAllester, and
  Li]{pencil}
C.~Yang, N.~Srebro, D.~McAllester, and Z.~Li.
\newblock Pencil: Long thoughts with short memory.
\newblock \emph{arXiv preprint arXiv:2503.14337}, 2025{\natexlab{a}}.

\bibitem[Yang et~al.(2025{\natexlab{b}})Yang, Ma, Lin, and
  Wei]{yang2025towards}
W.~Yang, S.~Ma, Y.~Lin, and F.~Wei.
\newblock Towards thinking-optimal scaling of test-time compute for llm
  reasoning.
\newblock \emph{arXiv preprint arXiv:2502.18080}, 2025{\natexlab{b}}.

\bibitem[Yao et~al.(2021)Yao, Peng, Papadimitriou, and Narasimhan]{yao2021self}
S.~Yao, B.~Peng, C.~Papadimitriou, and K.~Narasimhan.
\newblock Self-attention networks can process bounded hierarchical languages.
\newblock In \emph{Proceedings of the 59th Annual Meeting of the Association
  for Computational Linguistics and the 11th International Joint Conference on
  Natural Language Processing (Volume 1: Long Papers)}, pages 3770--3785, 2021.

\bibitem[Zhu et~al.(2024)Zhu, Li, Ma, and Wang]{zhu2024improving}
X.~Zhu, J.~Li, C.~Ma, and W.~Wang.
\newblock Improving mathematical reasoning capabilities of small language
  models via feedback-driven distillation.
\newblock \emph{arXiv preprint arXiv:2411.14698}, 2024.

\end{thebibliography}
\bibliographystyle{abbrvnat}

\appendix
\section{Modifications on Transformers in Turing Completeness Proofs}\label{app:mod}

Existing proofs of Turing-completeness for Transformers typically do not employ the vanilla architecture, but instead introduce specific modifications. For clarity, we summarize these changes below.

\citet{perez} consider a Transformer with an absolute positional encoding given by the triplet $(i,1/i,1/i^2)$ at position $i$. In their construction, the attention score is defined as the negative absolute value of the dot product, and the attention mechanism uses average-hard attention. Moreover, the feed-forward layers employ sigmoid activations in place of ReLUs.

\citet{ashish} dispense with positional encodings altogether and instead adopt \emph{Strict Causal Masking}, where attention at position $i$ can access all tokens up to position $i-1$ but not the current token $i$. Their construction also uses average-hard attention and introduces a \emph{Projected Pre-Norm}: an arbitrary linear projection is allowed before layer normalization. 

In the proof of \citet{tengyu}, the simulated circuit is directly encoded into the absolute positional encoding, rather than into the parameters. 

\citet{prompt} consider Transformers with nonstandard absolute positional encodings. In their construction, the query, key, and value maps in the attention sublayer are implemented as ReLU networks rather than linear transformations.

Finally, both \citet{constant} and this work employ nonstandard \emph{relative} positional encodings.

\section{Proof of Theorem \ref{thm:qmtr}}\label{app:qmtr}
\begin{proof}
Let $\tm=(\Sigma,Q,\Delta)$ be a synchronous $K$-queue Turing machine running in $t(n)$ time and $\leq s(n)$ space. Without loss of generality, let the tape alphabet be $\Sigma = \{0,1,\bot\}$ and the set of states be $Q=\{0,1\}^c$ for some $c\in\N$. We further assume that $q_{start}$ never appears in any transition $\delta(\sigma_0,\cdots,\sigma_{K-1},q)$, i.e., once $\tm$ leaves $q_{start}$ at the beginning, it never returns. Because each of the $K$ queues pops exactly one element and also appends exactly one element at each step, their sizes are fixed during the execution of $\tm$. Let $s_0(n),s_1(n),\cdots,s_{K-1}(n)$ denote the queue sizes, \warn{with $s_{\max}(n):=\max_{r=0}^{K-1} s_r(n)$}. \warn{For technical convenience, without loss of generality, we assume $s_1(n),s_2(n),\cdots,s_K(n)$ are always even numbers.}

We now construct a constant bit-size Transformer $\tf$ with a context window of length $s(n)$ to faithfully simulate $\tm$ step by step. \warn{Let $\V=\Sigma^K\times Q\times P=\{0,1,\bot\}^K\times \{0,1\}^c\times\{0,1\}$, where $P$ is a bit indicating the parity of the location.}
The intuition is as follows:

\textbf{1. Token embedding layer}\quad Map each token $v=(\vec{\sigma}:=(\sigma_0,\cdots,\sigma_{K-1}),q,p)$ to a vector $\emb(\vec{\sigma},q,p):=\left(\emb_0(\sigma_0,q,p),\cdots,\emb_{K-1}(\sigma_{K-1},q,p)\right)\in \R^{(c+9)\times K}$ as follows: 
\begin{flalign}\label{eq:emb}
\emb_r(\sigma_r,q)=\begin{cases}
 (1,1,0,0,q,0,0,0,p,0), & \text{if } \sigma_r=0;\\
 (1,0,1,0,q,0,0,0,p,0), & \text{if } \sigma_r=1;\\
 (1,0,0,1,q,0,0,0,p,0), & \text{if } \sigma_r=\bot.
 \end{cases}
 \end{flalign}

\warn{\textbf{2. Relative positional encoding}\quad Inside the attention score, add relative information as follows\footnote{\warn{Without loss of generality, we assume $d/H\geq H\cdot L$ or equivalently $c+9\geq H$ by making $c$ sufficiently large.}}: 
\begin{flalign}\label{eq:pos}
\pos(\Delta)\in\R^{d/H}=\begin{cases}
 \vec{1}_{d/H}, & \text{if } \Delta=0,\\
  2\sum_{r:\Delta=s_r(n)-1}\vec{e}_r , & \text{if } \Delta>0.
 \end{cases}
 \end{flalign}
where $\vec{e}_{r}\in\R^{d/H}$ stands for the standard basis vector with a $1$ at the $r$-th coordinate.
}


\textbf{3. Decoder layer}\quad The Transformer has $L$ decoder layers, and each decoder layer has $H$ attention heads, with $H\cdot L=K$. For the $k$-th attention head in the $\ell$-th layer, where $k=0,\cdots,H-1$ and $\ell=0,\cdots,L-1$, we set the matrix $\vec{K}^\ell_k,\vec{Q}^\ell_k,\vec{V}^{\ell}_k\in\R^{d\times d/H}=\R^{d\times ((c+9)\times L)}$ where $d=(c+9)\cdot K$ as follows: let $r=k\cdot L+\ell$,
\begin{itemize}
\item \warn{Matrix $\vec{K}^\ell_k$: all-zeros matrix;}
\item \warn{Matrix $\vec{Q}^\ell_k$: it has only one non-zero entry $Q_{(c+9) r+1,r}=1$;}
\item Matrix $\vec{V}^\ell_k$: it has four non-zero entries: \warn{$V_{(c+9)r+2,(c+9)\ell+c+5}=V_{(c+9)r+3,(c+9)\ell+c+6}=V_{(c+9)r+4,(c+9)\ell+c+7}=V_{(c+9)r+c+8,(c+9)(\ell+1)}=1$}.
\end{itemize}
We set $\vec{W}^\ell\in\R^{d\times d}$ as the identity matrix and set the bias vector $\vec{b}$ to $\vec{0}_d$.

The feed-forward network $\FF^\ell$ maps $\vec{h}^\ell$ to $\vec{0}_{d}$, except that $\FF^{L-1}$ is designed to simulate the transition function $\delta$. Specifically, we let
\begin{equation}\label{eq:ff}
\FF^{L-1}(\vec{h})+\vec{h}=\warn{\vec{e}_{(\delta(\sigma_1,\cdots,\sigma_K,q)),p'}}, \text{ where } q=\vec{h}_{5:c+4} \text{ and } \warn{p'=1-\vec{h}_{c+8}} 
\end{equation}
\begin{flalign*}
\sigma_r=\begin{cases}
 \bot, & \text{if } \warn{\vec{h}_{(c+9)r+c+8}=\vec{h}_{(c+9)r+c+9}};\\
  h_{(c+9)r+c+5:(c+9)r+c+7}, & \warn{\text{otherwise}};\\
\end{cases}
\end{flalign*}
Here, $\vec{e}_{(\vec{\sigma},q,p)}\in\R^{|\V|}$ denotes the standard basis vector with a $1$ at the coordinate indexed by $(\vec{\sigma},q,p)$.

\textbf{4. Output layer}\quad We set $W^{\out}$ to be the identity matrix and $\vec{b}^{\out}$ the all-zeros vector. Then the output layer outputs
$v_{i+1}:=\arg\max(\vec{h}^L_i)$. 

\begin{figure}[t]
  \centering
  \includegraphics[width=16cm]{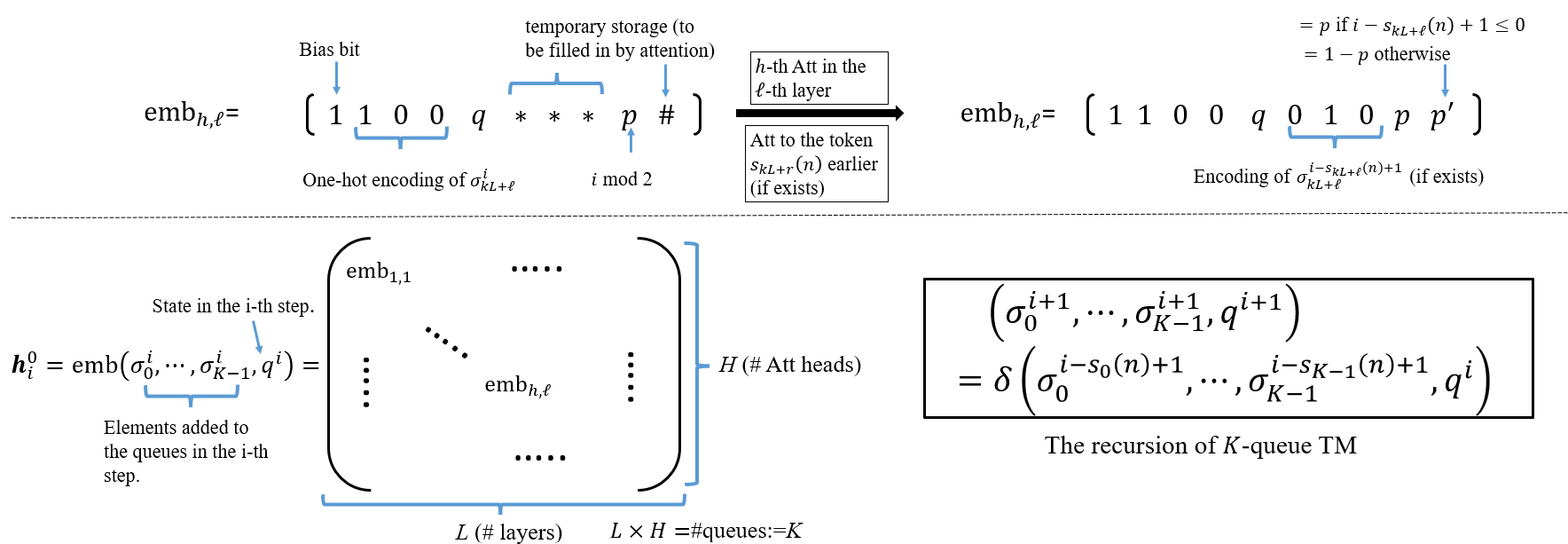}
  \caption{\warn{Illustration of our construction in the proof of Theorem \ref{thm:qmtr}}.}
  \label{fig:thm3}
\end{figure}

\paragraph{Analysis} What remains is to prove that our construction $\tf$ faithfully simulates $\tm$. Let $(\vec{\sigma}^1,q^1),(\vec{\sigma}^2,q^2),\cdots$ denote the execution log of $\tm$ running on $x\in\{0,1\}^n$, where $\vec{\sigma}^i=(\sigma^i_0,\cdots,\sigma^i_{K-1})$ denote the elements added to the queues in the $i$-th step and $q^i$ is the state when adding $\vec{\sigma}^i$. For convenience, we assume initially the input queue ($r=0$) contains $x$, and the other $K-1$ queues contain only blank symbols $\bot$; that is, for $i\leq n$, $(\vec{\sigma}^i,q^i)=(x_i,\perp^{K-1},q_{start})$. Then, one can see that 
\[
\left(\vec{\sigma}^{i+1},q^{i+1}\right)=\delta\left(\sigma^{i-s_0(n)+1}_0,\cdots,\sigma^{i-s_{K-1}(n)+1}_{K-1},q^i\right), 
\]
where if $i-s_r(n)+1\leq 0$, then we set
\begin{flalign*}
\sigma_r^{i-s_r(n)+1}=\begin{cases}
  x_{i}, & \text{if } r=0;\\
 \bot, & \text{if } r=1,2,\cdots,K-1.
\end{cases}
\end{flalign*}
Let $v_1, v_2, \dots$ denote the token sequence in the context of $\tf$ when it takes \warn{$v_1 = (x_1,\bot^{K-1},q_{start},1), \cdots$, $v_n = (x_n,\bot^{K-1},q_{start},n\mod 2)$} as input.
By induction, we will show that for each $i\geq 1$, we have $v_{i}=(\vec{\sigma}^i,q^i,\warn{i\mod 2})$, and then finish the proof. The base case $i\leq n$ is trivial. Now, we assume $i\geq n+1$ and $v_{i'}=(\vec{\sigma}^{i'},q^{i'},\warn{i'\mod 2})$ for any $i'\leq i$.

\begin{claim}\label{claim:ell}
For any $0\leq \ell\leq L-1$, $\vec{h}^\ell_i$ and $\vec{h}^{0}_i$ differ only in the entries at indices \warn{$(c+9)\cdot r+c+5:(c+9)\cdot r+c+7$ and $(c+9)\cdot r+c+9$.} 
\end{claim}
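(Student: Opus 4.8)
\textbf{Proof proposal for Claim~\ref{claim:ell}.}

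The plan is to prove the claim by induction on $\ell$, tracking exactly which coordinates of the hidden state are touched as the token $v_i$ passes through decoder layer $\ell$. The base case $\ell=0$ is vacuous. For the inductive step, I would trace the three components of layer $\ell$ in turn: the attention scores, the head outputs together with the residual update, and the feed-forward sublayer. For the attention scores, note that $\vec K^\ell_k$ is the all-zeros matrix, so the key contribution is zero and the score for the pair $(i,j)$ reduces to $\langle \pos(i-j),\,\vec h^\ell_i\cdot \vec Q^\ell_k\rangle$. Since $\vec Q^\ell_k$ has a single nonzero entry picking out coordinate $(c+9)r+1$ of $\vec h^\ell_i$ (with $r=kL+\ell$), and by the embedding~\eqref{eq:emb} that coordinate equals $1$ (it is unchanged at previous layers by the first-coordinate pattern of the matrices), the query vector is $\vec e_r\in\R^{d/H}$. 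By the definition of $\pos$ in~\eqref{eq:pos}, the score is $2$ precisely when $i-j=s_r(n)-1$ and $1$ when $i-j=0$ (i.e.\ $j=i$), and at most $1$ otherwise; since $i>n\ge s_r(n)$ such a $j=i-s_r(n)+1\ge 1$ exists in the context, so $\hardmax$ selects exactly that token. Hence head $k$ in layer $\ell$ reads off $\vec h^\ell_{\,i-s_r(n)+1}$.

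Next I would compute the residual update $\vec h^{\ell+0.5}_i = \vec W^\ell \vec a^\ell_i + \vec h^\ell_i = \vec a^\ell_i + \vec h^\ell_i$ (since $\vec W^\ell$ is the identity and $\vec b^\ell=\vec 0$). The head output $\vec a^\ell_{i,k} = \vec h^\ell_{\,i-s_r(n)+1}\cdot \vec V^\ell_k$; by the four nonzero entries of $\vec V^\ell_k$, this copies the three ``symbol'' coordinates $(c+9)r+2,\,(c+9)r+3,\,(c+9)r+4$ of the attended token (which by~\eqref{eq:emb} one-hot encode $\sigma_r$ of that token, i.e.\ the symbol popped from queue $r$) into coordinates $(c+9)\ell+c+5,\,(c+9)\ell+c+6,\,(c+9)\ell+c+7$ of $\vec a^\ell_{i,k}$, and copies coordinate $(c+9)r+c+8$ (the parity bit of the attended token) into coordinate $(c+9)(\ell+1) = (c+9)\ell+c+9$. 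Concatenating over $k$ and adding the residual, the only coordinates of $\vec h^{\ell+0.5}_i$ that change relative to $\vec h^\ell_i$ are those in the blocks $(c+9)\ell+c+5:(c+9)\ell+c+7$ and index $(c+9)\ell+c+9$ — i.e.\ exactly the claimed indices for this particular $\ell$ (with $r\mapsto \ell$ in the index formula, since those output slots are indexed by the layer, not the head). One subtlety to check here is that these target slots were zero in the embedding~\eqref{eq:emb} and have not been written by any earlier layer $\ell'<\ell$ (earlier layers write only into their own slots $(c+9)\ell'+\dots$, which are disjoint), so the addition genuinely installs the value rather than corrupting something. Finally, for $\ell<L-1$ the feed-forward map $\FF^\ell$ sends its input to $\vec 0_d$, so $\vec h^{\ell+1}_i = \FF^\ell(\vec h^{\ell+0.5}_i)+\vec h^{\ell+0.5}_i = \vec h^{\ell+0.5}_i$, and the set of altered coordinates does not grow; accumulating over $\ell' = 0,\dots,\ell-1$ gives exactly the union of index blocks claimed in the statement. (The case $\ell=L-1$, where $\FF^{L-1}$ implements $\delta$, is handled separately after the claim and is not covered here.)

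I expect the main obstacle to be bookkeeping the index arithmetic cleanly: keeping straight the distinction between the per-token coordinate layout of size $c+9$ versus the per-queue index $r=kL+\ell$, and verifying that the read slots $(c+9)r+2{:}4$ and $(c+9)r+c+8$ of head $k$ and the write slots $(c+9)\ell+c+5{:}7$ and $(c+9)(\ell+1)$ indexed by layer $\ell$ never collide across heads or layers. Once the disjointness of all these index ranges is established, the rest — that $\vec W^\ell=I$, $\vec b^\ell=0$, $\vec K^\ell_k=0$, and $\FF^\ell\equiv 0$ for $\ell<L-1$ force the residual stream to only accumulate the copied symbols — is immediate. A secondary point worth stating explicitly is that the first coordinate of each per-queue block stays equal to $1$ throughout (it is never a target of any $\vec V^\ell_k$), which is what makes the query argument in the attention step go through uniformly across all layers.
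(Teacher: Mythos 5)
Your approach matches the paper's exactly: induct on $\ell$, observe that $\vec{W}^\ell=I$, $\vec{b}^\ell=\vec{0}$, $\FF^\ell\equiv 0$ for $\ell<L-1$, and that each head output $\vec{a}^\ell_{i,k}=\sum_j s^\ell_{k,i,j}\,\vec{h}^\ell_j\vec{V}^\ell_k$ is supported only on the sparse column set of $\vec{V}^\ell_k$ regardless of what the attention scores are. In fact, for this particular claim all of your reasoning about the attention scores — that $\vec{K}^\ell_k=\vec{0}$, that $\hardmax$ selects $j=i-s_r(n)+1$, that the first coordinate of each block stays $1$ — is unnecessary and the paper omits it; the sparsity of $\vec{V}^\ell_k$ makes the supported coordinates independent of which $j$'s get weight. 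That material belongs to the next claim (about which values land in those slots), not this one.

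There is a concrete indexing error in your final step. You correctly note that the nonzero \emph{local} coordinates of $\vec{a}^\ell_{i,k}\in\R^{d/H}$ are $(c+9)\ell+c+5$, $(c+9)\ell+c+6$, $(c+9)\ell+c+7$, and $(c+9)(\ell+1)$. But when the head outputs are concatenated into $\vec{a}^\ell_i\in\R^d$, head $k$'s block is shifted by $k\cdot(d/H)=k(c+9)L$, so the \emph{global} coordinates written by head $k$ in layer $\ell$ are $k(c+9)L+(c+9)\ell+c+5=(c+9)(kL+\ell)+c+5=(c+9)r+c+5$ (and analogously $c+6$, $c+7$, $c+9$), where $r=kL+\ell$. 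Your parenthetical — ``with $r\mapsto\ell$ in the index formula, since those output slots are indexed by the layer, not the head'' — is therefore wrong: the slots are indexed by $r$, which depends on both head and layer, and for $H>1$ your version accounts only for the $k=0$ head. The claim's index $r$ is precisely $kL+\ell$, and without the shift the union over layers would miss most of the touched coordinates. Repair: apply the offset $k(c+9)L$ when passing from $\vec{a}^\ell_{i,k}$ to $\vec{a}^\ell_i$, after which your disjointness observation (distinct $(k,\ell)$ pairs give distinct $r$, hence disjoint blocks, all initialized to $0$ by \eqref{eq:emb}) goes through correctly.
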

\begin{proof} By the description of $\tf$, we have 
\begin{align*}
h^{\ell}_i=\FF(h^{\ell-0.5})+h^{\ell-0.5}_i=h^{\ell-0.5}_i=h^{\ell-1}_i+\left((\vec{a}^{\ell-1}_{i,1})^T,\cdots,(\vec{a}^{\ell-1}_{i,H})^T\right)^T,
\end{align*} 
where 
\[
\vec{a}^{\ell-1}_{i,k}=s_{k,i,1}^{\ell-1}\cdot \vec{h}^{\ell-1}_1\cdot \vec{V}^{\ell-1}_k+\cdots+s_{k,i,i}^{\ell-1}\cdot \vec{h}^{\ell-1}_i\cdot \vec{V}^{\ell-1}_k
\]
By definition of $\vec{V}^{\ell-1}_k$, $\vec{a}^{\ell-1}_{i,k}$ is zero at all indices except \warn{$(c+9)\cdot r+c+5:(c+9)\cdot r+c+7$ and $(c+9)\cdot r+c+9$.}
By induction on $\ell$, we get the claim. 
\end{proof}

\begin{claim}
\warn{$\vec{h}_{(c+9)r+c+8}=\vec{h}_{(c+9)r+c+9}$ if and only if $i-s_r(n)+1\leq 0$. Furthermore, if $i-s_r(n)+1>0$, then $\vec{h}^{L-0.5}_{i,(c+9)r+c+5:(c+9)r+c+7}=\sigma_{i-s_r(n)+1}$.}
\end{claim}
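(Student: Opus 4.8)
The plan is to unwind the definitions of the embedding, the relative positional encoding, and the value matrices, and track precisely which token each attention head at layer $\ell$, head $k$ pulls in, where $r = k\cdot L+\ell$. The key observation is that the query matrix $\vec{Q}^\ell_k$ reads off the single coordinate $(c+9)r+1$ of $\vec{h}^\ell_i$, which by the embedding \eqref{eq:emb} equals $1$ at every token (the ``always-on'' flag), while the key matrix $\vec{K}^\ell_k$ is all-zeros, so the raw bilinear score is $0$ for every candidate token $j\leq i$. Hence the attention score for position $j$ is determined entirely by $\langle \pos(i-j), \vec{h}^\ell_i\cdot \vec{Q}^\ell_k\rangle = \langle\pos(i-j),\vec{e}_r\rangle$ (using $d/H\ge H\cdot L$ so the coordinate $r$ is in range). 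By \eqref{eq:pos}, $\pos(0)=\vec1_{d/H}$ contributes $1$ in coordinate $r$, while $\pos(\Delta)$ for $\Delta>0$ contributes $2$ in coordinate $r$ exactly when $\Delta=s_r(n)-1$, and $0$ otherwise. So the $\hardmax$ selects the token at offset $\Delta=s_r(n)-1$, i.e.\ position $j=i-s_r(n)+1$, provided that position exists (i.e.\ $i-s_r(n)+1\ge 1$); if it does not, the only nonzero score is at $\Delta=0$, so the head attends to the current token $i$ itself.

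From here the two halves of the claim follow by case analysis on whether $i-s_r(n)+1\le 0$. In the degenerate case $i-s_r(n)+1\le 0$, head $(k,\ell)$ with $r=k L+\ell$ attends to the current token $i$, whose embedding has a $0$ in the coordinate that $\vec{V}^\ell_k$ routes into position $(c+9)(\ell+1)$ (the ``$\bot$-detection'' slot, $\vec{h}_{(c+9)r+c+9}$), because the embedding writes nonzero values only into coordinates $c+5,c+6,c+7$ and leaves $c+9$ at zero — while by Claim~\ref{claim:ell} the coordinate $(c+9)r+c+8$ also stays at its embedding value, which I must check is likewise $0$ (it is, since \eqref{eq:emb} only sets the ``$1$'' flag at index $1$ and the three symbol-indicator bits at $2,3,4$, plus the state and parity bits). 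Thus $\vec{h}_{(c+9)r+c+8}=\vec{h}_{(c+9)r+c+9}=0$, matching the condition for the feed-forward $\FF^{L-1}$ in \eqref{eq:ff} to read $\sigma_r=\bot$, which is exactly the intended behavior when queue $r$ has not yet received a ``real'' symbol at the position being popped. Conversely, when $i-s_r(n)+1>0$, head $(k,\ell)$ attends to token $v_{i-s_r(n)+1}=(\vec\sigma^{\,i-s_r(n)+1},q^{i-s_r(n)+1}, (i-s_r(n)+1)\bmod 2)$ by the induction hypothesis, and $\vec{V}^\ell_k$ copies its symbol-indicator bits (coordinates $c+5,c+6,c+7$ of block $r$ of the attended token, which encode $\sigma^{i-s_r(n)+1}_r$ via \eqref{eq:emb}) into $\vec{h}^{L-0.5}_{i}$ at positions $(c+9)r+c+5{:}(c+9)r+c+7$, and copies the ``$1$'' flag (coordinate $1$ of that block, always $1$) into position $(c+9)r+c+9$. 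Since the embedding sets exactly one of the three symbol bits, one of positions $c+5,c+6,c+7$ becomes $1$ while $c+8$ remains $0$, so $\vec{h}_{(c+9)r+c+8}\neq\vec{h}_{(c+9)r+c+9}=1$, and $\vec{h}^{L-0.5}_{i,(c+9)r+c+5:(c+9)r+c+7}$ is precisely the one-hot encoding of $\sigma^{i-s_r(n)+1}_r$, as asserted.

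The main obstacle is bookkeeping the index arithmetic cleanly: one must verify that across the $L$ layers the heads never overwrite each other's output slots (each head $(k,\ell)$ writes only into block $r=kL+\ell$, and the map $(k,\ell)\mapsto kL+\ell$ is a bijection onto $\{0,\dots,K-1\}$, so this holds) and that the residual stream carries the attended symbols forward unchanged until $\FF^{L-1}$ — this is exactly Claim~\ref{claim:ell}, which I will have already established. The only genuinely delicate point is handling the boundary $i-s_r(n)+1\le 0$ uniformly with the one-step delayed-append convention used for the input tokens $v_1,\dots,v_n$; I will note that for $r=0$ the ``blank'' region of the tape is instead occupied by the input, but since the claim is about $\sigma_{i-s_r(n)+1}$ for the generic step $i\ge n+1$ it suffices to invoke the recurrence relating $v_{i+1}$ to $\delta(\cdot)$ stated just above, and the $r=0$ input-loading phase is subsumed in the base case $i\le n$. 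With these pieces in place, the claim reduces to reading off \eqref{eq:emb}, \eqref{eq:pos}, the definitions of $\vec{Q}^\ell_k,\vec{K}^\ell_k,\vec{V}^\ell_k$, and Claim~\ref{claim:ell}.
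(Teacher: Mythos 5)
Your analysis of the attention pattern in the first paragraph is correct and matches the paper: the all-zero key matrix, the query that reads out the always-on flag at coordinate $(c+9)r+1$, and the positional encoding $\pos$ together make head $(k,\ell)$ with $r=kL+\ell$ attend to position $j=i-s_r(n)+1$ when that index is positive, and to the current position $i$ otherwise.

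The second half of your argument, however, rests on a misreading of $\vec{V}^\ell_k$ and of the embedding, and as a result it misses the mechanism that actually makes the ``if and only if'' hold. The fourth nonzero entry of $\vec{V}^\ell_k$ is $V_{(c+9)r+c+8,(c+9)(\ell+1)}=1$, so what gets routed into slot $(c+9)r+c+9$ of $\vec{h}^{L-0.5}_i$ is coordinate $(c+9)r+c+8$ of the \emph{attended} token --- its parity bit $p_j$ --- not the always-on flag at coordinate $(c+9)r+1$ as you state. Likewise, coordinate $(c+9)r+c+8$ of the embedding is not $0$: by \eqref{eq:emb} it carries the parity bit $p$, which you even name in your own parenthetical before concluding it is zero. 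If the construction really copied the always-on flag, slot $(c+9)r+c+9$ would equal $1$ in \emph{both} cases, and the condition $\vec{h}_{(c+9)r+c+8}=\vec{h}_{(c+9)r+c+9}$ could not distinguish the boundary case from the generic one.

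The paper's actual mechanism is parity matching, which your proposal never mentions and which is the whole reason the assumption that each $s_r(n)$ is even was introduced. Slot $(c+9)r+c+8$ of $\vec{h}^{L-0.5}_i$ retains $p_i=i\bmod 2$ from the current token's embedding (unchanged across layers by Claim~\ref{claim:ell}), while the head writes $p_j$ into slot $(c+9)r+c+9$. When $j=i$ (i.e.\ $i-s_r(n)+1\le 0$) these agree. When $j=i-s_r(n)+1>0$, we have $j\equiv i+1\pmod 2$ precisely because $s_r(n)$ is even, so $p_j\ne p_i$ and the two slots disagree. Without this parity invariant the equivalence is not provable, so it is not a bookkeeping detail but the core of the claim. (Two smaller slips: the symbol-indicator bits live at coordinates $2,3,4$ of each block, not $c+5,c+6,c+7$, which are the initially-zero destination slots; and the ``delayed append'' trick belongs to the queue-machine simulation in Step~I, not to the Transformer construction here.)
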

\begin{proof} Let $(\ell,k)$ be the unique pair satisfying that $r=k\cdot L+\ell$. One can easily check that 
\[
\vec{h}^{L-0.5}_{i,(c+9)r+1:(c+9)(r+1)}=\vec{h}^{0}_{i,(c+9)r+1:(c+9)(r+1)}+\left(\vec{a}^\ell_{i,k,(c+9)\ell+1:(c+9)(\ell+1)}\right)^T.
\]
For the $k$-th attention head in the $\ell$-th layer, by Claim \ref{claim:ell}, we have \warn{$\vec{K}^\ell_k\cdot \vec{h}^\ell_j=\vec{0}_{d/H}$,
$\vec{Q}^\ell_k\cdot \vec{h}^\ell_i=\vec{Q}^\ell_k\cdot \vec{h}^0_i=\vec{e}_r$, so
\begin{flalign*}
\vec{s}_{k,i}^\ell = &\hardmax\left(\langle \pos(i-1),\vec{h}^0_i\cdot \vec{Q}^\ell_k\rangle,\cdots,\langle \pos(i-i), \vec{h}^0_i\cdot \vec{Q}^\ell_k\rangle\right)\\
=&\begin{cases}
(0,\cdots,0,1)\in\R^i, & \text{if } i\leq s_r(n)-1,\\
(\vec{0}_{i-s_r(n)},1,\vec{0}_{s_r(n)-1})\in\R^{i}, & \text{if } s_r(n)\leq i\leq s(n)-1,\\
(\vec{0}_{s(n)-s_r(n)},1,\vec{0}_{s_r(n)-1})\in\R^{s(n)}, & \text{if } i\geq s(n).
\end{cases}
\end{flalign*}
and
{\small
\begin{flalign*}
\vec{a}^\ell_{k,i}=&\sum_{j=1}^{i} s^\ell_{k,i,j}\cdot \left(\vec{h}^\ell_j\cdot \vec{V}^\ell_k\right)=\sum_{j=1}^{i} s^\ell_{k,i,j}\cdot \left(\vec{h}^0_j\cdot\vec{V}^\ell_k\right)\\
=&\sum_{j=1}^{i} s^\ell_{k,i,j}\cdot \left(\vec{0}_{(c+9)\ell+c+4},\vec{h}^0_{j,(c+9)\ell+2:(c+9)\ell+4},0,h^0_{j,(c+9)\ell+c+8},\vec{0}_{(c+9)(L-\ell-1)}\right)\\
=&\begin{cases}
\left(\vec{0}_{(c+9)\ell+c+4},\vec{h}^0_{i,(c+9)\ell+2:(c+9)\ell+4},0,\vec{h}^0_{i,(c+9)\ell+c+8},\vec{0}_{(c+9)(L-\ell-1)}\right), & \text{if } i\leq s_r(n)-1,\\
\left(\vec{0}_{(c+9)\ell+c+4},\vec{h}^0_{i-s_r(n)+1,(c+9)\ell+2:(c+9)\ell+4},0,\vec{h}^0_{i-s_r(n)+1,(c+9)\ell+c+8},\vec{0}_{(c+9)(L-\ell-1)}\right), & \text{otherwise}
\end{cases}
\end{flalign*}
}
Recall the assumption $s_1(n),s_2(n),\cdots,s_K(n)$ are always even numbers, now the claim is clear by the induction hypothesis.}
\end{proof}

Besides, in the last feed-forward network layer $\FF^{L-1}$, we have 
\[
\vec{h}_{i,5:c+4}^{L-0.5}=\vec{h}_{i,5:c+4}^{0}=q_{i}, \text{ and } \vec{h}_{i,(c+9)r+c+5:(c+9)r+c+7}^{L-0.5}=\sigma_{i-s_r(n)+1}.
\]
The vector $\vec{h}_i^{L-0.5}$ is mapped to $\FF(\vec{h}_i^{L-0.5})+\vec{h}_i^{L-0.5}$, which is $\vec{e}_{\delta\left(\sigma^{i-s_0(n)+1}_0,\cdots,\sigma^{i-s_{K-1}(n)+1}_{K-1},q^i\right)}$ according to Equation (\ref{eq:ff}). 
Then one can see that the output layer outputs $(\vec{\sigma}^{i+1},q^{i+1})$ as desired. 
\end{proof}

\end{document}